\documentclass[10pt]{llncs}
\pdfoutput=1
%\documentclass[letterpaper,10pt,conference]{ieeeconf}  % Comment this line out
                                                          % if you need a4paper
%\IEEEoverridecommandlockouts                              % This command is only
                                                          % needed if you want to
                                                          % use the \thanks command
%\overrideIEEEmargins
% See the \addtolength command later in the file to balance the column lengths
% on the last page of the document

\usepackage{xspace}
\usepackage[english]{babel}
\usepackage[latin1]{inputenc}

\usepackage{subfigure}
\usepackage{amsmath,amssymb}
\usepackage{enumerate}

\usepackage{times}

\usepackage{floatflt}

\usepackage{graphicx}

%\usepackage{tikz}
%\usetikzlibrary{automata}
 
\usepackage{tikz}
\usetikzlibrary{automata,positioning,fit,calc,backgrounds}
\tikzset{initial text={},
    every state/.style={circle,minimum size=.6cm,draw=blue!50,very thick,fill=blue!20},
    acc/.style={double distance=1.5pt},
    secret/.style={minimum size=.4cm,draw=red!50,very thick,fill=red!20,rectangle},
    node distance=1.5cm,on grid,auto,
    bend angle=65}

\def\ie{{i.e.},~}
\def\eg{{e.g.},~}
\def\st{{s.t.}~}
\def\emptyset{\varnothing}

\def\Tr{\textit{Tr}}
\def\tr{\textit{tr}}

\def\prefaulty{\textit{PreFaulty}}

\def\nonfaulty{\textit{NonFaulty}}

\def\runs{\textit{Runs}} 
\def\lang{{\cal L}} 
%\def\tr{\textsf{\upshape   \bfseries trace}} 

  % math mode only
\def\projs{\boldsymbol{\pi}}
\newcommand{\proj}[1]{\projs_{/#1}} % use in math mode only

 % use in math mode only
 % use in math mode only

\newcommand{\vect}[1]{\mathbf{#1}}

\newtheorem{prob}{Problem}  %[section]
\iffalse
\newtheorem{definition}{Definition} 
\newtheorem{theorem}{Theorem} 
\newtheorem{lemma}{Lemma} 
\newtheorem{remark}{Remark} 
\newtheorem{example}{Example} 
 
\fi

\newcommand{\setN}{\mathbb N}
\newcommand{\setR}{\mathbb R}
\newcommand{\setB}{\mathbb B}
\newcommand{\setZ}{\mathbb Z}
\newcommand{\setQ}{\mathbb Q}

\def\calA{{\cal A}}
\def\calB{{\cal B}}

\def\calC{{\cal C}}

\def\cc{\calC}

\def\enabled{\textit{en}}

% \newcommand{\out}[1]{{\cal O}(#1)} % I changed it not to conflict with class of observers

%\newtheorem{fact}{Fact}

%\def\obs{\mbox{\normalfont Obs}\xspace}

% to mark the end of a def
\def\endef{\ifmmode\squareforged\else{\unskip\nobreak\hfil
\penalty50\hskip1em\null\nobreak\hfil$\blacksquare$
\parfillskip=0pt\finalhyphendemerits=0\endgraf}\fi}

\def\tauac{\tau}

%% End of macros %%

\newcommand{\dur}{{\textit{Dur}}} % <E0> utiliser en mode math<E9>matique
\def\inv{\textit{Inv}}

\def\tw{\textit{TW\/}}

\def\untimed{\textit{Unt}}

\def\true{\mbox{\textsc{true}}}
\def\false{\mbox{\textsc{false}}}

\def\rg{\textit{RG}}

\def\tgt{\textit{tgt}}

% comment to remove color crossrefs 
%\usepackage[colorlinks=true,citecolor=blue]{hyperref}

%\pagestyle{plain}

\def\uppaal{\textsc{Uppaal}\xspace}

\newcommand{\stack}[1]{\begin{tabular}{c}#1\end{tabular}}
\newcommand{\prefix}[1]{\overline{#1}}

\title{\LARGE \bf Predictability of Event Occurrences \\ in Timed Systems}

\author{Franck Cassez\inst{1} \and Alban Grastien\inst{2}}

\institute{
NICTA\thanks{NICTA is funded by the Australian Government
    as represented by the Department of Broadband, Communications and
    the Digital Economy and the Australian Research Council through
    the ICT Centre of Excellence program.} and UNSW, Sydney
\and NICTA and ANU, Canberra \\ Australia}

\bibliographystyle{splncs}

\begin{document}
\pagestyle{plain}
\maketitle
  
\thispagestyle{empty}

\begin{abstract} 
  We address the problem of predicting events' occurrences in
  partially observable timed systems modelled by timed automata.  Our
  contribution is many-fold: 1) we give a definition of bounded
  predictability, namely $k$-predictability, that takes into account
  the minimum delay between the prediction and the actual event's
  occurrence; 2) we show that $0$-predictability is equivalent to the
  original notion of predictability of S.~Genc and S.~Lafortune; 3) we
  provide a necessary and sufficient condition for $k$-predictability
  (which is very similar to $k$-diagnosability) and give a simple
  algorithm to check $k$-predictability; 4) we address the problem of
  predictability of events' occurrences in timed automata and show
  that the problem is PSPACE-complete.
\end{abstract}
\medskip

\section{Introduction}

Monitoring and fault diagnosis aim at detecting defects that can occur
at run-time.  The monitored system is partially observable but a
formal model of the system is available which makes it possible to
build (offline) a monitor or a diagnoser.  Monitoring and fault
diagnosis for discrete event systems (DES) have been have been
extensively investigated in the last two
decades~\cite{Raja95,yoo-lafortune-tac-02,Jiang-01}.  Fault diagnosis
consists in detecting a fault \emph{as soon as possible} after it
occurred. It enables a system operator to stop the system in case
something went wrong, or reconfigure the system to drive it to a safe
state.  \emph{Predictability} is a strong version of diagnosability:
instead of detecting a fault after it occurred, the aim is to
\emph{predict} the fault before its occurrence.  This gives some time
to the operator to choose the best way to stop the system or to
reconfigure it.

In this paper, we address the problem of predicting event occurrences
in partially observable timed systems modelled by timed automata.
\smallskip

\noindent{\it The Predictability Problem.}
A timed automaton~\cite{AlurDill94} (TA) generates a timed language
which is a set of timed words which are sequences of pairs (event,
time-stamp).  Only a subset of the events generated by the system is
observable.  The objective is to predict occurrences of a particular
event (observable or not) based on the sequences of observable events.
Automaton $G$, Fig.~\ref{fig-example}, is a timed version of the
example of automaton $G_1$ of~\cite{automatica-genc-09}. The set of
observable events is $\{a,b,c\}$. We would like to predict event $f$
without observing event $d$.
\begin{figure}[hbtp]
  \centering
  \scalebox{0.9}{\begin{tikzpicture}[thick,node distance=1cm and 3cm]% 
    \small
    \node[state,initial] (q_0) [label=-90:{$[x \leq 1]$}] {$l_0$}; 
    \node[state] (q_3) [above right=of q_0,label=-90:{$[x \leq 1]$}] {$l_3$};
    \node[state] (q_1) [below right=of q_0,label=-90:{$[x \leq 2]$}] {$l_1$};
    \node[state] (q_2) [right=of q_1,label=-90:{$[x \leq 3]$}] {$l_2$};
    \node[state] (q_4) [right=of q_3,label=-90:{$[x < 1]$}] {$l_4$};
    \node[state,secret] (q_f) [right=of q_2,label=-90:{$[x \leq 1]$}] {$l_f$};

    \path[->] (q_0) edge[swap,pos=0.66] node[yshift=.2cm] {\stack{$x=1$\\$a$\\$x:=0$}} (q_1)
    (q_0) edge[pos=0.65] node[yshift=-.2cm] {\stack{$x<1$\\$d$}} (q_3)
    (q_1) edge node {$x=2$,$c$,$x:=0$} (q_2)
    (q_3) edge node {$x\leq 1$,$a$,$x:=0$} (q_4)
    (q_4) edge[loop right] node {$b;x:=0$} (q_4)
    (q_2) edge node {$x \geq 2$,$f$,$x:=0$} (q_f)
    (q_f) edge[loop right] node {\stack{$a,b,c$\\$x:=0$}} (q_f); 
  \end{tikzpicture}
}
\caption{Example $G$ from~\cite{automatica-genc-09}.}
\label{fig-example}
\vspace*{-.6cm}
\end{figure}

First consider the untimed version of $G$ by ignoring the constraints
on clock $x$.  The untimed automaton can generate two types of events'
sequences: $d.a.b^*$ and $a.c.f.\{a,b,c\}^*$.  Because $d$ is
unobservable, after observing $a$ we do not know whether the system is
in location $l_4$ or $l_1$ and cannot predict $f$ as, according to our
knowledge, it is not bound to occur in all possible futures from
locations $l_4$ or $l_1$.  However, after the next observable event,
$b$ or $c$, we can make a decision: if we observe $a.c$, $G$ must be
in $l_2$ and thus $f$ is going to happen next. After observing $a.c$
we can predict event $f$.  Note that there is no quantitative duration
between occurrences of events in discrete event systems and thus we
can predict $f$ at a \emph{logical} time which is before $f$ occurs.
The time that separates the prediction of $f$ from the actual
occurrence of $f$ is measured in the number of discrete steps $G$ can
make. In this sense $G$ is $0$-predictable as when we predict $f$, it
is the next event to occur. The untimed version of $G$ is an
abstraction of a real system, and in the real system, it could be that
$f$ is going to occur 5 seconds after we observe $c$.

Timed automata enable us to capture quantitative aspects of real-time
systems.  We can use \emph{clocks} (like $x$) to specify constraints
between the occurrences of events.  Moreover \emph{invariants} (like
$[x \leq 1]$) ensure that $G$ changes location when the upper bound of
the invariant is reached. In the timed automaton $G$, the (infinite)
sequences with no $f$ are of the form
$(d,\delta_d)(a,\delta_a)(b,\delta_b)\cdots$ with $\delta_d < 1,
\delta_a \leq 1$ and $\delta_b < 2$. The sequences with event $f$ are
of the form $(a,1)(c,3)(f,\delta_f)$ with $5 \leq \delta_f \leq 6$.
Thus if we do \emph{not} observe a ``b'' within the first two time
units, we know that the system is in location $l_1$.  This implies
that $f$ is going to occur, and we know this at time $2$. But $f$ will
not occur before $1 + 2$ time units, the time for $c$ to occur (from
time $2$) and the minimum time for $f$ to occur after $c$.  $G$ is
thus $3$-predictable.  In the sequel we formally define the previous
notions and give efficient algorithms to solve the predictability
problem.

%  \FC{comment on the example}

\smallskip

\noindent{\it Related Work.}
Predictability for discrete event systems was first proposed by
S.~Genc and S.~Lafortune in~\cite{ifac-2006}. Later
in~\cite{automatica-genc-09} they gave two algorithms to decide the
predictability problem, one of them is a polynomial decision
procedure.  T.~J\'eron, H.~Mar\-chand, S.~Genc and
S.~Lafortune~\cite{ifac2008} extended the previous results to
occurrences of \emph{patterns} (of events) rather than a single event.
L. {Brand\'an Briones} and A. {Madalinski} in~\cite{sccc-11} studied
\emph{bounded} predictability without relating it to the notion
defined by S.~Genc and S.~Lafortune.

Predictability is closely related to \emph{fault
  diagnosis}~\cite{Raja95,yoo-lafortune-tac-02,Jiang-01}.  The
objective of fault diagnosis is to detect the occurrence of a special
event, a fault, which is unobservable, as soon as possible after it
occurs.  Fault diagnosis for timed automata has first been studied by
S.~Tripakis in~\cite{tripakis-02} and he proved that the diagnosis
problem is PSPACE-complete.  P.~Bouyer,
F.~Chevalier and D.~D'Souza~\cite{Bouyerfossacs05} later studied the problem of
computing a diagnoser with \emph{fixed} resources (a deterministic TA)
and proved that this problem is 2EXPTIME-complete.
To the best of our knowledge the predictability problem for TA has not
been investigated yet.

\iffalse
A concept very similar to prediction is diagnosis of faults, 
where the problem is to determine 
whether a fault \emph{did} occur.  
A diagnoser is a machine that may raise a flag when a fault occurred; 
a $k$-diagnoser guarantees 
that a flag will be raised at most $k$ transitions after the fault.  
Finally $k$-diagnosability is defined as the existence of a $k$-diagnoser.  
We propose a consistent definition of predictability: 
a predictor is a machine that may raise a flag 
when a fault is inevitable; 
the $k$-predictor guarantees to raise a flag at least $k$ steps 
before the fault occurs; 
$k$-predictability is defined as the existence of a $k$-predictor.  
We show that $0$-predictability is equivalent 
to predictability as defined by S.~Genc and S.~Lafortune, 
we provide an algorithm for deciding $k$-predictability 
and for deciding the maximal $k$.  

In general, a number of transitions 
is a very crude estimate of a time interval.  
In order to enable appropriate response, 
for instance an emergency procedure vs.{} 
an unexcited reconfiguration, 
a delay before failure should be provided.  
We therefore extend the definition of predictability 
to timed automata \cite{AlurDill94}, 
where a $\Delta$-predictor should predict the fault occurrence 
a delay at least $\Delta$ before hand.  
We show that the decision procedure associated 
with this problem is PSPACE-complete.  
\fi

\smallskip

\noindent{\it Our Contribution.}
We give a new characterization of bounded predictability and show it
is equivalent to the definition of S.~Genc and S.~Lafortune.  This new
characterization is simple and dual to the one for the diagnosis
problem; we can derive easily algorithms to decide predictability,
bounded predictability, and to compute the largest anticipation delay
to predict a fault.
We also study the bounded predictability problem for TA and prove it
is PSPACE-complete.  We investigate implementability issues, \ie how
to build a \emph{predictor}, and solve the \emph{sampling
  predictability} problem which ensures an implementable predictor
exists. We show how to compute bounded predictability % simple
with \uppaal~\cite{uppaal-sttt}.

\smallskip

\noindent{\it Organization of the Paper.}
The paper is organized as follows: the next section recalls some
definitions: timed words, timed automata. Section~\ref{sec-def-nsc}
states the predictability problems for TA and Finite Automata (FA) and
presents a necessary and sufficient condition for bounded
predictability.  Section~\ref{sec-predic-fa} compares our definition
of predictability with the original one (by S.~Genc and S.~Lafortune)
and provides an algorithm (for finite automata) to solve the bounded
predictability problem and compute the largest bound.
Section~\ref{sec-predic-ta} studies the bounded predictability problem
for TA and implementation issues related to the construction of a
predictor. An example is also solved with \uppaal.
%Finally in Section~\ref{sec-conclu} we give some directions for future
%work. 
Omitted proofs are given in Appendix.

%%% Local Variables: 
%%% mode: latex
%%% TeX-master: "main"
%%% End: 

\section{Preliminaries}
$\setB=\{\true,\false\}$ is the set of boolean values, $\setN$ the set
of natural numbers, $\setZ$ the set of integers and $\setQ$ the set of
rational numbers.  $\setR$ is the set of real numbers and $\setR_{\geq
  0}$ is the set of non-negative reals.

\subsection{Clock Constraints}
Let $X$ be a finite set of variables called \emph{clocks}.  A
\emph{clock valuation} is a mapping $v : X \rightarrow \setR_{\geq
  0}$. We let $\setR_{\geq 0}^X$ be the set of clock valuations over
$X$. We let $\vect{0}_X$ be the \emph{zero} valuation where all the
clocks in $X$ are set to $0$ (we use $\vect{0}$ when $X$ is clear from
the context).  Given $\delta \in \setR$, $v + \delta$ denotes the
valuation defined by $(v + \delta)(x)=v(x) + \delta$. We let $\cc(X)$
be the set of \emph{convex constraints} on $X$ which is the set of
conjunctions of constraints of the form $x \bowtie c$ with $c
\in\setN$ and $\bowtie \in \{\leq,<,=,>,\geq\}$. Given a constraint $g
\in \cc(X)$ and a valuation $v$, we write $v \models g$ if $g$ is
satisfied by $v$.  Given $R \subseteq X$ and a valuation $v$, $v[R]$
is the valuation defined by $v[R](x)=v(x)$ if $x \not\in R$ and
$v[R](x)=0$ otherwise.

\subsection{Timed Words}
A finite (resp. infinite) \emph{timed word} over $\Sigma$ is a word in
$\setR_{\geq 0}.(\Sigma.\setR_{\geq 0})^*$ (resp. $(\setR_{\geq
  0}.\Sigma)^\omega$).  We write timed words as $0.4\ a\ 1.0\ b\ 2.7 \
c \cdots$ where the real values are the durations elapsed between two
events: thus $c$ occurs at global time $4.1$.  We let $\dur(w)$ be the
duration of a timed word $w$ which is defined to be the sum of the
durations (in $\setR_{\geq 0}$) which appear in $w$; if this sum is
infinite, the duration is $\infty$.  Note that the duration of an
infinite word can be finite, and such words which still contain an
infinite number of events, are called \emph{Zeno} words.  An infinite
timed word $w$ is \emph{time-divergent} if $\dur(w) = \infty$.  We let
$\untimed(w)$ be the \emph{untimed} version of $w$ obtained by erasing
all the durations in $w$, \eg $\untimed(0.4\ a\ 1.0\ b\ 2.7 \ c \ 0)=abc$.
Given $w$ a timed word and $a \in \Sigma$, $|w|_a$ is the number of
occurrences of $a$ in $w$ ($\infty$ if $a$ occurs infinitely often in
$w$.)

$\tw^*(\Sigma)$ is the set of finite timed words over $\Sigma$,
$\tw^\omega(\Sigma)$, the set of infinite timed words and
$\tw^\infty(\Sigma)=\tw^*(\Sigma) \cup \tw^\omega(\Sigma)$.  We use
$\Sigma^*$ and $\Sigma^\omega$ for the corresponding sets of untimed
words.  A \emph{timed language} is any subset of
$\tw^\infty(\Sigma)$. For $L \subseteq \tw^\infty(\Sigma) $, we let
$\untimed(L)=\{ \untimed(w) \ | \ w \in L \}$.

For $w \in \tw^*(\Sigma)$ and $w' \in \tw^\infty(\Sigma)$, $w.w'$ is
the concatenation of $w$ and $w'$.  A finite timed word $w$ is a
prefix of $w' \in \tw^\infty(\Sigma)$ if $w' = w.w''$ for some $w''
\in \tw^\infty(\Sigma)$. In the sequel we also the prefix operator and
$\prefix{L}$ is the set of finite words that are prefixes of words in
$L$.

Let $\Sigma_1 \subseteq \Sigma$. $\proj{\Sigma_1}$ is the projection
of timed words of $\tw^\infty(\Sigma)$ over timed words of
$\tw^\infty(\Sigma_1)$.  When projecting a timed word $w$ on a
sub-alphabet $\Sigma_1 \subseteq \Sigma$, the durations elap\-sed
bet\-ween two events are set accordingly: $\proj{\{a,c\}}(0.4 \ a\
1.0\ b\ 2.7 \ c )=0.4 \ a \ 3.7 \ c$ (projection erases some events
but preserves the time elapsed between the non-erased events).  It
follows that $\proj{\Sigma_1}(w) = \proj{\Sigma_1}(w')$ implies that
$\dur(w) = \dur(w')$.  For $L \subseteq \tw^\infty(\Sigma) $,
$\proj{\Sigma_1}(L)=\{ \proj{\Sigma_1}(w) \ | \ w \in L\}$.

\subsection{Timed Automata}
Timed automata (TA) are finite automata extended with real-valued
clocks to specify timing constraints between occurrences of events.
For a detailed presentation of the fundamental results for timed
automata, the reader is referred to the seminal paper of R.~Alur and
D.~Dill~\cite{AlurDill94}.  As usual we use the symbol $\varepsilon$
to denote the silent (invisible) action in an automaton.

\noindent\begin{definition}[Timed Automaton]\label{def-ta} 
  A \emph{Timed Automaton} $A$ is a tuple $(L,$ $l_0,$ $X,\Sigma \cup
  \{\varepsilon \}, E, \inv, F, R)$ where:
%\begin{itemize} 
%\item 
$L$ is a finite set of  \emph{locations}; 
%\item 
$l_0$ is the \emph{initial location};
%\item 
$X$ is a finite set of \emph{clocks};
%\item 
$\Sigma$ is a finite set of \emph{events};
%\item 
$E \subseteq L \times\calC(X) \times \Sigma \cup \{ \varepsilon \}
\times 2^X \times L$ is a finite set of \emph{transitions}; for
$(\ell,g,a,r,\ell') \in E$, $g$ is the \emph{guard}, $a$ the
\emph{event}, and $r$ the \emph{reset} set;
%\item 
$\inv : L \rightarrow  \calC(X)$ associates with each location an
\emph{invariant}; as usual we require the invariants to be
conjunctions of constraints of the form $x \preceq c$ with $\preceq
\in \{<,\leq\}$.
%\item 
  $F \subseteq L$ and $R \subseteq L$ are respectively the
  \emph{final} and \emph{repeated} sets of locations. \endef
%\end{itemize}
\end{definition}
% An example of TA is given in Fig.~\ref{fig-ex-diag1}.
A \emph{state} of $A$ is a pair $(\ell,v) \in L \times \setR_{\geq
  0}^X$.
%
% A \emph{symbolic run} of $A$ is a (finite or infinite)
% sequence of transitions $(q_0,g_0,a_0,r_0,q_1) (q_1,g_1,a_1,r_1,q_2)
% \cdots (q_n,g_n,a_n,r_n,q_{n+1})\cdots$.
%
A \emph{run} $\varrho$ of $A$ from $(\ell_0,v_0)$ is a
(finite or infinite) sequence of alternating \emph{delay} and
\emph{discrete} moves:
\begin{eqnarray*}
  \varrho & = & (\ell_0,v_0) \xrightarrow{\delta_0} (\ell_0,v_0 + \delta_0)
  \xrightarrow{a_1} (\ell_1,v_1)  \cdots \xrightarrow{a_{n}} (\ell_n,v_n)
  \xrightarrow{\delta_n} (\ell_n,v_n+ \delta_n) \cdots 
\end{eqnarray*}
\st for every $i \geq 0$:
\begin{itemize}
\item $v_i + \delta \models \inv(\ell_i)$ for $0 \leq \delta \leq
  \delta_i$ (Def.~\ref{def-ta} implies that $v_i + \delta_i \models
  \inv(\ell_i)$ is equivalent);
\item there is a transition $(\ell_i,g_i,a_{i+1},r_i,\ell_{i+1}) \in
  E$ \st: ($i$) $v_i + \delta_i \models g_i$ and ($ii$)
  $v_{i+1}=(v_i+\delta_i)[r_i]$ (by the previous condition we have
  $v_{i+1} \models \inv(\ell_{i+1})$.)
\end{itemize}
If $\varrho$ is finite and ends in $s_n$, we let $\tgt(\varrho)=s_n$.
We say that event $a \in \Sigma \cup \{ \varepsilon\}$ is
\emph{enabled} in $s=(\ell,v)$, written $a \in \enabled(s)$, if there
is a transition $(\ell,g,a,R,\ell') \in E$ s.t.  $v \models g$ and
$v[R] \models \inv(\ell')$.
The set of finite (resp. infinite) runs % in $A$
from a state $s$ is denoted $\runs^*(s,A)$ (resp. $\runs^\omega(s,A)$)
and we define $\runs^*(A)=\runs^*((l_0,\vect{0}),A)$ and
$\runs^\omega(A)=\runs^\omega((l_0,\vect{0}),A)$.

We make the following \emph{boundedness} assumption on timed automata:
time-progress in every location is bounded. This is not a restrictive
assumption as every timed automaton that does not satisfy this
requirement can be transformed into a language-equivalent one that is
bounded~\cite{behrmann-hscc-01}. This implies that every infinite run
has an infinite number of events. We further assume\footnote{Otherwise
  the trace of an infinite word can have a finite number of events in
  $\Sigma$ but still infinite duration which cannot be defined in our
  setting. This is not a compulsory assumption and can be removed at
  the price of longer (not more complex) proofs.} that every
infinite run has an infinite number of discrete transitions with $a
\neq \varepsilon$.

The \emph{trace}, $\tr(\varrho)$, of a run $\varrho$ is the timed word
$\delta_0 a_1 \delta_1 a_2 \cdots a_n \delta_n \cdots$ where
$\varepsilon$ is removed (and durations are updated accordingly).  We
let $\dur(\varrho)=\dur(\tr(\varrho))$.  For $V \subseteq \runs^*(A)
\cup \runs^\omega(A)$, we let $\Tr(V)=\{\tr(\varrho) \ | \ \textit{
  $\varrho \in V$}\}$. %, which is the set of traces of the runs in
                       %$V$.

A finite (resp. infinite) timed word $w$ is \emph{accepted} by $A$ if
$w = \tr(\varrho)$ for some $\varrho \in \runs^*(A)$ that ends in an
$F$-location (resp. for some $\varrho \in \runs^\omega(A)$ that
reaches infinitely often an $R$-location).  $\lang^*(A)$
(resp. $\lang^\omega(A)$) is the set of traces of finite
(resp. infinite) timed words accepted by $A$.
In the sequel we often omit the sets $R$ and $F$ in TA and this
implicitly means $F=L$ and $R=L$.

\subsection{Product of Timed Automata}
%Let $\tauac$ be the silent letter.
\begin{definition}[Product of TA] \label{def-prod-sync} Let
  $A_i=(L_i,l_0^i,X_i,\Sigma \cup \{ \varepsilon\},$ $E_i,\inv_i,F_i,R_i)$,
  $i \in\{1,2\}$, be TA \st $X_1 \cap X_2 = \emptyset$.  The
  \emph{product} of $A_1$ and $A_2$ is the TA $A_1 \times
  A_2=(L,l_0,X,$$\Sigma \cup \{ \varepsilon \},E,\inv,R,F)$ defined by:
%  \begin{itemize}
%  \item 
$L=L_1 \times L_2$;
%  \item 
$l_0=(l_0^1,l_0^2)$;
%  \item 
%$\Sigma=\Sigma^1 \cup \Sigma^2$;
%  \item 
$X = X_1 \cup X_2$; and
%  \item 
$E \subseteq L \times \calC(X) \times \Sigma \cup \{ \varepsilon \}\times 2^X \times
    L$ and
    $((\ell_1,\ell_2),g_{1,2},\sigma,r_{1,2},(\ell'_1,\ell'_2)) \in E$
    if:
    \begin{itemize}
    \item either $\sigma \neq \varepsilon$, and ($i$)
      $(\ell_k,g_k,\sigma,r_k,\ell'_k) \in E_k$ for $k=1$ and $k=2$;
      ($ii$) $g_{1,2} = g_1 \wedge g_2$ and ($iii$) $r_{1,2}=r_1 \cup r_2$;
    \item or $\sigma = \varepsilon$ and for $ k \in \{1,2\}$, ($i$)
      $(\ell_k,g_k,\sigma,r_k,\ell'_k) \in E_k$; ($ii$) $g_{1,2}=g_k$, ($iii$) $r_{1,2}=r_k$
      and ($iv$) $\ell'_{3-k}= \ell_{3-k}$;
%$q_i \xrightarrow{\sigma}_i q'_i$ et
%      $q'_{3-i}=q_{3-i}$, pour $i=1$ or $i=2$.
    \end{itemize}
%  \item
    $\inv(\ell_1,\ell_2)= \inv(\ell_1) \wedge \inv(\ell_2)$, $F = F_1
    \times F_2$ and $R$ is defined~\footnote{The product of B\"uchi
      automata requires an extra variable to keep track of the
      automaton that repeated its state. For the sake of simplicity we
      ignore this and assume the set $R$ can be defined to ensure
      $\lang^\omega(A_1) \cap \lang^\omega(A_2) = \lang^\omega(A_1
      \times A_2)$.} such that $\lang^\omega(A_1) \cap
    \lang^\omega(A_2) = \lang^\omega(A_1 \times A_2)$.
     \endef
%  \end{itemize}
\end{definition}

\subsection{Finite Automata}
A finite automaton (FA) is a TA with $X=\emptyset$: guards and
invariants are vacuously true and time elapsing transitions do not
exist.

We write $A=(L,$ $l_0,\Sigma \cup \{\varepsilon\},E,F,R)$ for a FA.  A
run of a FA $A$ is thus a sequence of the form: $\varrho = \ell_0
\xrightarrow{\ a_1\ } \ \ell_1 \cdots \ \ \ \cdots \xrightarrow{\
  a_{n}\ } \ \ell_n \cdots $
% \begin{eqnarray*}
%   \varrho & = & \ell_0 
%   \xrightarrow{\ a_1\ } \ \ell_1 \cdots  \ \ \  \cdots \xrightarrow{\ a_{n}\ } \ \ell_n
%   \cdots 
% \end{eqnarray*}
where for each $i \geq 0$, $(\ell_i,a_{i+1},\ell_{i+1}) \in E$.
Definitions of traces and languages are inherited from TA but the
duration of a run $\varrho$ is the number of steps (including
$\varepsilon$-steps) of $\varrho$: if $\varrho$ is finite and ends in
$\ell_n$, $\dur(\varrho)=n$ and otherwise $\dur(\varrho)=\infty$.  The
product definition also applies to finite automata.
%%% Local Variables: 
%%% mode: latex
%%% TeX-master: "main"
%%% End: 

\section{Predictability Problems}
\label{sec-def-nsc}

Predictability problems are defined on partially observable TA.  Given
a TA $A=(L,\ell_0,$ $X,\Sigma,E,\inv,L,L)$, $\Sigma_o \subseteq
\Sigma$ a set of \emph{observable} events, and a \emph{bound} $\Delta
\in \setN$, we want to predict the occurrences of event $f \in \Sigma$
at least $\Delta$ time units before they occur. Without loss of
generality, we assume 1) that the target location of the
$f$-transitions is $l_f$, and they all reset a dedicated clock of $A$,
$x$, which is only used on $f$-transitions; 2) $A$ has transitions
$(l_f,\true,a,\{x\},l_f)$ for every $a \in \Sigma_o$.  We let
$\inv(l_f) = x \leq 1$.
In the remaining of this paper, $\Sigma_o$ is fixed and we use
$\projs$ for $\proj{\Sigma_o}$.

We again make the assumption that every infinite run of $A$ contains
infinitely many $\Sigma_o$ events: this is not compulsory but
simplifies some of the proofs.

\subsection{$\Delta$-Predictability}

A run $\rho$ of $A$ is \emph{non-faulty} if $\untimed(\tr(\rho))$ does
not contain event $f$; otherwise it is \emph{faulty}.  We write
%$\faulty(s,A)$ for the faulty runs of $A$ from $s$ and
$\nonfaulty(s,A)$ for the non-faulty runs from $s$
%$\faulty(A) = \faulty((l_0,\vec{0}),A)$ 
and define $\nonfaulty(A)= \nonfaulty((l_0,\vec{0}),A)$.
Let $\varrho \in \nonfaulty(A)$ be a finite non-faulty run:
\begin{eqnarray*}
  \varrho & = & (l_0,v_0) \xrightarrow{\delta_0} (l_0,v_0 + \delta_0)
  \xrightarrow{a_1} (l_1,v_1)  \cdots \xrightarrow{a_{n}} (l_n,v_n)
  \xrightarrow{\delta_n} (l_n,v_n+ \delta_n) \mathpunct.
\end{eqnarray*}
$\varrho$ is \emph{$\Delta$-prefaulty}, if it can be extended 
by a run $\varrho'$ as follows:
\[
\varrho'' = (l_0,v_0) \xrightarrow{\delta_0} \cdots
\xrightarrow{\delta_n} \underbrace{\tgt(\varrho) \xrightarrow{\
    \delta'_0\ } s'_1 \xrightarrow{\ a'_1 \delta'_1\ } \cdots
  \xrightarrow{\ a'_k \delta'_k\ } \cdots \xrightarrow{\ a'_j
    \delta'_j\ } s_j}_{\text{\normalfont run $\varrho'$}}
\]
where the extended run $\varrho'' \in \nonfaulty(A)$ satisfies: ($i$)
$f \in \enabled(s_j)$ and ($ii$) $\dur(\rho') \leq \Delta$ (\ie
$\sum_{k=0}^{j} \delta'_k \leq \Delta$.)  In words, $f$ can occur
within $\Delta$ time units from $\tgt(\varrho)$.
We let $\prefaulty_{\leq \Delta}(A)$ be the set of $\Delta$-prefaulty
runs of $A$.  Note that if $\Delta \leq \Delta'$ then
$\prefaulty_{\leq 0}(A) \subseteq \prefaulty_{\leq \Delta}(A)
\subseteq \prefaulty_{\leq \Delta'}(A)$.

We want to predict the occurrence of event $f$ at least $\Delta$ time
units before it occurs and it makes sense only if $\Delta \leq
\kappa(A)$ where $\kappa(A)$ is the minimum duration to reach a state
where $f$ is enabled.  If $f$ is never enabled, we let $\kappa(A) =
\infty$.
If $\kappa(A)$ is finite, let $0 \leq \Delta \leq \kappa(A)$ and
define the following timed languages: % as for FA:
\begin{eqnarray}
  \label{eq-def-timed-lang-1}
  L^\omega_{\neg f} & = & \lang^\omega(A) \cap \Tr(\nonfaulty(A))  \\
  L^{-\Delta}_{f} & = & \Tr(\prefaulty_{\leq \Delta}(A))
   \label{eq-def-timed-lang-2}
  \mathpunct.
\end{eqnarray}
If $\kappa(A) = \infty$ then we let $L^{-\Delta}_{f} = \varnothing$.
$L^\omega_{\neg f}$ contains the infinite non-faulty traces of $A$.
$L^{-\Delta}_{f}$ contains the finite traces $w$ of $A$ that can be
extended into $w.x.f$ with $f$ occurring less then $\Delta$ time units
after $w$.

\medskip A \emph{$\Delta$-Predictor} is a device that predicts the
occurrence of $f$ at least $\Delta$ time units before it occurs.  It
should do it observing only the projection $\projs(w)$ of the current
trace $w$. Thus for every word $w \in L^{-\Delta}_f$, the predictor
predicts $f$ by issuing a $1$.  On the other hand, if a trace $w$ can
be extended as an infinite trace without any event $f$, \ie it is in $
\prefix{L^\omega_{\neg f}}$, the predictor must not predict $f$ and
thus should issue a $0$.  For a trace which is in $L^{-\Delta'}_f$
with $\Delta ' > \Delta$ and not in $ \prefix{L^\omega_{\neg f}}$, we do not
require anything from the predictor: it can predict $f$ or not
and this is why we define a predictor as a partial mapping.
\begin{definition}[$\Delta$-Predictor]\label{def-predictability}
  A \emph{$\Delta$-predictor} for $A$ is a partial mapping $P:
  \tw^*(\Sigma_o) \longrightarrow \{0,1\}$ such that:
  \begin{itemize}
  \item $\forall w \in L^{-\Delta}_f, P(\projs(w)) = 1$,
  \item $\forall w \in \prefix{L^\omega_{\neg f}}, P(\projs(w))=0$.
  \end{itemize}
  $A$ is \emph{$\Delta$-predictable} if there exists a $\Delta$-predictor for
  $A$ and is \emph{predictable} if there is some $\Delta$ such that $A$ is
  $\Delta$-predictable.
\endef
\end{definition}
It follows that if $f$ is never enabled in $A$, $A$ is
$\Delta$-predictable for any $\Delta$: a predictor is a mapping
$P(\cdot)=0$.
In the sequel we assume that $A$ contains a state where $f$ is enabled
and thus $\kappa(A)$ is finite.\footnote{Checking whether a state
  where $f$ is enabled is reachable and the computation of $\kappa(A)$
  can be done in PSPACE~\cite{courcoubetis-fmsd-92} for TA and linear
  time for FA.}

%\begin{remark}\label{rem-zeno}
  In the dual problem of \emph{diagnosability}~\cite{tripakis-02}, it
  is required that the infinite words in $L^\omega_{\neg f}$ be
  \emph{non-Zeno}.  This is required by the problem statement that
  time must advance beyond any bound.  For predictability, this is not
  a requirement and we could accept non time-divergent runs in
  $L^{\omega}_{\neg f}$.  However for realistic systems we should add
  this requirement. This can be easily done and we discuss how to do
  this in section~\ref{sec-zeno}.
%\end{remark}
%
\subsection{PSPACE-Hardness of Bounded Predictability}
We are interested in the two following problems:
\begin{prob}[$\Delta$-Predictability (Bounded Predictability)] \label{prob-delta-pred} \mbox{} \\
  \textsc{Input:} A  TA $A=(L,\ell_0,X,\Sigma,E,\inv)$ and $\Delta \in \setN$. \\
  \textsc{Problem:} Is $A$ $\Delta$-predictable?
\end{prob}
\begin{prob}[Predictability] \label{prob-pred} \mbox{} \\
  \textsc{Input:} A TA $A=(L,\ell_0,X,\Sigma,E,\inv)$. \\
  \textsc{Problem:} Is $A$ predictable?
\end{prob}
Notice that predictability problems for finite automata are defined
using the number of steps in the automaton $A$ (including unobservable
steps) for the duration of a run.  
\iffalse
\begin{prob}[Maximal Anticipation] \label{prob-delay} \mbox{} \\
  \textsc{Input:} A TA $A=(L,\ell_0,X,\Sigma,E,\inv)$. \\
  \textsc{Problem:} If $A$ is predictable, what is the maximum
  $\Delta$ \st $A$ is $\Delta$-predictable?
\end{prob}
\fi A first result is the PSPACE-hardness of the Bounded
Predictability problem. This is obtained by reducing the
\emph{reachability problem} for TA to the Bounded Predictability
problem.
%\begin{remark}\label{rem-reduce-reach}
The \emph{location reachability problem} for TA asks, given a location
$l$, whether $(l,v)$ (for some valuation $v$) is reachable from the
initial state of $A$.  This problem is PSPACE-complete for
TA~\cite{AlurDill94}.  
\begin{theorem}\label{rem-reduce-reach}
  The Bounded Predictability problem is PSPACE-hard for TA.
\end{theorem}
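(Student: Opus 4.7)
My plan is to reduce the location reachability problem for TA, which is PSPACE-complete~\cite{AlurDill94}, to the complement of the $\Delta$-Predictability problem with $\Delta = 0$. Given an instance $(A,l)$ of reachability, I build a polynomial-size TA $A'$ over $\Sigma' = \{a,f\}$ with $\Sigma_o = \varnothing$. The fresh initial location $q_0$ of $A'$ has invariant $y \leq 0$ for a fresh clock $y$ and two outgoing $\varepsilon$-transitions. The first transition enters a copy of $A$ in which every original event (including $\varepsilon$) is relabeled to $a$ while keeping all guards, resets and invariants; from the copy of $l$ I add an $f$-transition with guard $\true$ that resets the dedicated clock $x$ and leads to a sink $l_f$ with $\inv(l_f) = (x \leq 1)$ (the $\Sigma_o$-self-loops at $l_f$ are vacuous). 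The second transition enters a location $q_B$ equipped with a fresh clock $x_B$, invariant $x_B \leq 1$, and an $a$-self-loop with guard $x_B = 1$ resetting $x_B$; this yields infinite, time-divergent, non-faulty runs whose finite prefixes realize any prescribed duration.

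I claim that $l$ is reachable in $A$ if and only if $A'$ is \emph{not} $0$-predictable. If $l$ is unreachable, then the copy of $l$ is unreachable in $A'$, no reachable state enables $f$, and $L^{-0}_f = \varnothing$; the constant map $P \equiv 0$ satisfies Definition~\ref{def-predictability} and is a $0$-predictor. Conversely, assume $l$ is reachable in $A$ via a run of total duration $d \geq 0$. In $A'$ the first branch mirrors that run and reaches the copy of $l$, where $f$ is immediately enabled, so the trace $w$ of this extended run lies in $L^{-0}_f$ with $\projs(w) = d$ (a bare duration, since $\Sigma_o = \varnothing$). Independently, the second branch provides an infinite non-faulty run whose finite prefix $\rho'$ can be arranged so that $\dur(\rho') = d$ (take $\lfloor d \rfloor$ iterations of the $a$-loop followed by a final delay of $d - \lfloor d \rfloor$), giving $w' = \tr(\rho') \in \prefix{L^\omega_{\neg f}}$ with $\projs(w') = d$. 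Any $0$-predictor $P$ would have to satisfy both $P(d) = 1$ and $P(d) = 0$, a contradiction.

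The main technical obstacle is threading the construction through the standing syntactic shape imposed on $A$: the dedicated $f$-clock $x$ is reset only by $f$-transitions, the target of every $f$-transition is the unique $l_f$ whose invariant is $x \leq 1$, every location has a bounded time-progress invariant, and every infinite run carries infinitely many non-$\varepsilon$ events. All these are handled above: $x$ and $x_B$ are fresh, $l_f$ is a sink with vacuous self-loops, the invariants on $q_0$, $q_B$, $l_f$ and inside the copy of $A$ are all bounded, and the blanket relabeling to $a$ ensures that every infinite run of the first branch carries infinitely many $a$'s while the second branch does so by construction. Since the transformation is computable in polynomial time and $\Delta = 0$ is constant, Problem~\ref{prob-delta-pred} inherits PSPACE-hardness from location reachability.
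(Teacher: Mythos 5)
Your reduction is sound and follows the same overall strategy as the paper's proof: reduce location reachability (PSPACE-complete, even for bounded TA) to the complement of $0$-predictability, by arranging that reachability of $l$ creates an observationally confusable pair --- a run from whose target $f$ is immediately enabled, and a run extending to an infinite non-faulty run with the same projection --- so that Lemma~\ref{lemma-nsc-fin} fails. The mechanism differs, however. The paper keeps the observable alphabet and creates the confusion \emph{after} $l$: it adds from $l$ both an $f$-transition and an unobservable $u$-transition into a common location END carrying observable unit-time loops, so the prefaulty run and the prefix of the infinite non-faulty run are literally the same finite run and their projections coincide for free. You instead take $\Sigma_o=\varnothing$ and match two \emph{distinct} runs (one through the copy of $A$, one through the $q_B$ clock gadget) purely by duration. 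Both reductions are polynomial and both yield the stated equivalence.

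Two caveats on your version, both repairable but worth stating. First, the choice $\Sigma_o=\varnothing$ violates the paper's standing assumption that every infinite run contains infinitely many $\Sigma_o$ events (your $q_B$ branch has infinite runs with no observable event at all). The paper flags that assumption as non-compulsory and Lemma~\ref{lemma-nsc-fin} does not rely on it, so your logic is not broken; but the hard instances you produce then lie outside the class for which the rest of the paper is developed, and the choice is load-bearing --- if $a$ were observable, the relabelled copy of $A$ and the integer-time $a$-loop at $q_B$ would no longer produce matching observable traces, and your duration-matching argument would collapse. The paper's construction sidesteps this entirely. Second, you assert that the invariants inside the copy of $A$ are bounded; that holds only if the reduction starts from the reachability problem for \emph{bounded} TA, which is what the paper explicitly does (and which remains PSPACE-complete). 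With these two points made explicit, your proof is correct.
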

\begin{proof}
We can reduce the location reachability problem
for bounded TA to the predictability problem as follows (the reduction
is similar to~\cite{tripakis-02}): let $A$ be a bounded TA and $l$ a
location of $A$.  We can build $A'$ by adding transitions to $A$: let
END by a new location.  We add a transition
$(l,\true,f,\{x\},\text{END})$, and another one
$(l,\true,u,\{x\},\text{END}$) with $u$ unobservable, assuming $A$ has
at least one clock $x$.  We then add loops on location END
$(\text{END},x = 1,a,\{x\},\text{END})$, for each $a \in \Sigma$.
Moreover $\inv(\text{END})=x \leq 1$.  It follows from our definition
of predictability that $l$ is reachable in $A$ iff $A'$ is not
predictable, and $A'$ has size polynomial in $A$.
%\end{remark}
%
\end{proof}

\subsection{Necessary and Sufficient Condition for
  $\Delta$-Predictability}
We now give a necessary and sufficient condition (NSC) for
$\Delta$-predictability which is similar in form to the condition used
for \emph{$\Delta$-diagnosability}~\cite{tripakis-02}.

%\FC{say in the style of diag}
\begin{lemma}\label{lemma-nsc-fin}
  $A$ is $\Delta$-predictable iff $\projs(L^{-\Delta}_f) \cap
  \projs(\prefix{L^\omega_{\neg f}}) = \varnothing \mathpunct.$
\end{lemma}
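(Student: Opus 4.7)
The plan is to prove both directions directly from the definition of a $\Delta$-predictor (Def.~\ref{def-predictability}), the key observation being that a predictor is defined on \emph{projected} words, so any conflict between the two required outputs can only arise from two timed words sharing the same projection.

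For the forward implication ($\Rightarrow$), I would argue by contradiction. Assume $A$ is $\Delta$-predictable and let $P$ be a witnessing $\Delta$-predictor. Suppose, for contradiction, that $u \in \projs(L^{-\Delta}_f) \cap \projs(\prefix{L^\omega_{\neg f}})$. Then there exist $w_1 \in L^{-\Delta}_f$ with $\projs(w_1)=u$ and $w_2 \in \prefix{L^\omega_{\neg f}}$ with $\projs(w_2)=u$. By the two clauses of Def.~\ref{def-predictability}, $P(u) = P(\projs(w_1)) = 1$ and $P(u) = P(\projs(w_2)) = 0$, a contradiction.

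For the backward implication ($\Leftarrow$), I would explicitly construct a $\Delta$-predictor. Assuming $\projs(L^{-\Delta}_f) \cap \projs(\prefix{L^\omega_{\neg f}}) = \varnothing$, define the partial map $P: \tw^*(\Sigma_o) \longrightarrow \{0,1\}$ by
\[
P(u) \;=\;
\begin{cases}
1 & \text{if } u \in \projs(L^{-\Delta}_f),\\
0 & \text{if } u \in \projs(\prefix{L^\omega_{\neg f}}),\\
\text{undefined} & \text{otherwise.}
\end{cases}
\]
The disjointness assumption ensures $P$ is well-defined (no $u$ receives both values). Both clauses of Def.~\ref{def-predictability} are then immediate: for any $w \in L^{-\Delta}_f$ we have $\projs(w) \in \projs(L^{-\Delta}_f)$, so $P(\projs(w))=1$, and symmetrically for $w \in \prefix{L^\omega_{\neg f}}$.

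I do not expect a real obstacle: the lemma is essentially the unpacking of Def.~\ref{def-predictability}, and the only subtlety to emphasize is that the predictor is a \emph{partial} mapping, which is precisely what allows the backward construction to leave $P$ undefined on projections outside the two prescribed sets (those in $\projs(L^{-\Delta'}_f) \setminus \projs(\prefix{L^\omega_{\neg f}})$ for $\Delta'>\Delta$). Without this partiality a third case would need treatment, but as stated the proof is a two-line argument in each direction.
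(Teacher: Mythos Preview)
Your proposal is correct and follows essentially the same approach as the paper: both directions argue directly from Def.~\ref{def-predictability}, the forward one by deriving the contradiction $P(u)=1=0$ from a word in the intersection, and the backward one by constructing $P$ explicitly. The only cosmetic difference is that the paper sets $P(w)=0$ on all remaining words rather than leaving $P$ undefined there, but since a predictor is allowed to be partial and the second clause of Def.~\ref{def-predictability} only constrains $\prefix{L^\omega_{\neg f}}$, both choices work.
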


\begin{proof}
\noindent{\it Only If.}
Assume $A$ is $\Delta$-predictable. There exists a partial mapping $P$
s.t.  $\forall w \in L^{-\Delta}_f, P(\projs(w)) = 1$, $\forall w \in
\prefix{L^\omega_{\neg f}}, P(\projs(w))=0$.  Assume $w \in
\projs(L^{-\Delta}_f) \cap \projs(\prefix{L^\omega_{\neg f}}) \neq
\varnothing$. Then $w = \projs(w_1) = \projs(w_2)$ with
$w_1 \in L^{-\Delta}_f$ and $w_2 \in \prefix{L^\omega_{\neg f}}$.
By definition of $P$ we must have $P(w) = P(\projs(w_1))=1$ and
$P(w) = P(\projs(w_2))=0$ which is a contradiction.

\noindent{\it If.}
If $\projs(L^{-\Delta}_f) \cap \projs(\prefix{L^\omega_{\neg f}}) =
\varnothing $ 
%define $P(w) =1$ if $w \in L^{-\Delta}_f$ 
define $P(w) =1$ if $w \in \projs(L^{-\Delta}_f)$ and $P(w)=0$
otherwise.  If $P$ does not exist, we must have $w = \projs(w_1) =
\projs(w_2)$ with $w_1 \in L^{-\Delta}_f$ and $w_2 \in
\prefix{L^\omega_{\neg f}}$. In this case $w \in \projs(L^{-\Delta}_f)
\cap \projs(\prefix{L^\omega_{\neg f}})$ which is a contradiction.
\qed
\end{proof}
From Lemma~\ref{lemma-nsc-fin} we can prove the following Proposition
and Theorem:
\begin{proposition}\label{prop-lt}
  if $\Delta \leq \Delta'$ and $A$ is $\Delta'$-predictable, then
  $A$ is $\Delta$-predictable.
\end{proposition}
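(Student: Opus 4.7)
The plan is to prove this via the necessary and sufficient condition established in Lemma~\ref{lemma-nsc-fin}. Since $A$ is $\Delta'$-predictable, that lemma gives us
\[
\projs(L^{-\Delta'}_f) \cap \projs(\prefix{L^\omega_{\neg f}}) = \varnothing,
\]
and to conclude $\Delta$-predictability it suffices to establish the same disjointness with $\Delta$ in place of $\Delta'$.

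The key observation is the monotonicity of the family $\prefaulty_{\leq \Delta}(A)$ in $\Delta$ that was recorded right after its definition: if $\Delta \leq \Delta'$ then $\prefaulty_{\leq \Delta}(A) \subseteq \prefaulty_{\leq \Delta'}(A)$. This is immediate from the definition, since any extension witnessing that a run is $\Delta$-prefaulty has duration at most $\Delta \leq \Delta'$ and therefore also witnesses $\Delta'$-prefaultyness. Taking traces and then projections preserves inclusion, so $L^{-\Delta}_f \subseteq L^{-\Delta'}_f$ and hence $\projs(L^{-\Delta}_f) \subseteq \projs(L^{-\Delta'}_f)$.

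Combining these two facts yields
\[
\projs(L^{-\Delta}_f) \cap \projs(\prefix{L^\omega_{\neg f}}) \subseteq \projs(L^{-\Delta'}_f) \cap \projs(\prefix{L^\omega_{\neg f}}) = \varnothing,
\]
so by the ``if'' direction of Lemma~\ref{lemma-nsc-fin}, $A$ is $\Delta$-predictable. There is no real obstacle here: the statement is a monotonicity property that reduces to monotonicity of the $\Delta$-prefaulty sets, and the NSC lemma does all the work of translating set disjointness back into the existence of a predictor. The only thing to be careful about is invoking the right direction of the lemma (the ``if'' direction, to build the new predictor from the disjointness condition) rather than trying to reuse the $\Delta'$-predictor directly, which would not obviously satisfy the $\Delta$-predictor constraints since its ``$1$'' outputs are only required on $L^{-\Delta'}_f \supseteq L^{-\Delta}_f$, not restricted to $L^{-\Delta}_f$.
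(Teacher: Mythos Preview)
Your proof is correct and essentially identical to the paper's: both use the monotonicity $L^{-\Delta}_f \subseteq L^{-\Delta'}_f$ together with Lemma~\ref{lemma-nsc-fin} to conclude. One small remark on your closing aside: in fact a $\Delta'$-predictor \emph{is} already a $\Delta$-predictor, since it outputs $1$ on all of $\projs(L^{-\Delta'}_f) \supseteq \projs(L^{-\Delta}_f)$ and $0$ on $\projs(\prefix{L^\omega_{\neg f}})$; so the direct-reuse route you dismissed would also have worked.
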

\begin{proof}
  $L^{-\Delta}_f \subseteq L^{-\Delta'}_f$ and thus
  $\projs(L^{-\Delta}_f) \cap \projs(\prefix{L^\omega_{\neg f}})
  \subseteq \projs(L^{-\Delta'}_f) \cap \projs(\prefix{L^\omega_{\neg
      f}})$.  \qed
\end{proof}

\begin{theorem}\label{thm-pred}
  $A$ is predictable iff $A$ is $0$-predictable.
\end{theorem}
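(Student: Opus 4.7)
The plan is to prove this essentially as an immediate corollary of Proposition~\ref{prop-lt} combined with the definition of predictability.

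First I would handle the \emph{if} direction, which is trivial: if $A$ is $0$-predictable then, by the definition of predictability (Definition~\ref{def-predictability}), $A$ is predictable, since predictability just asks for the existence of some $\Delta$ for which $A$ is $\Delta$-predictable, and $\Delta=0$ witnesses this.

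For the \emph{only if} direction, suppose $A$ is predictable. Then by definition there exists some $\Delta \in \setN$ such that $A$ is $\Delta$-predictable. Since $0 \leq \Delta$, I would invoke Proposition~\ref{prop-lt} with this $\Delta$ playing the role of $\Delta'$ and with $0$ playing the role of $\Delta$ to conclude that $A$ is $0$-predictable. This uses the monotonicity fact that $L^{-0}_f \subseteq L^{-\Delta}_f$, so the emptiness of the intersection $\projs(L^{-\Delta}_f) \cap \projs(\prefix{L^\omega_{\neg f}})$ at level $\Delta$ transfers to level $0$ via Lemma~\ref{lemma-nsc-fin}.

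There is really no obstacle here: the substantive monotonicity content has already been established in Proposition~\ref{prop-lt}, which in turn rested on the NSC of Lemma~\ref{lemma-nsc-fin}. The theorem is therefore a one-line consequence, and its role in the paper is to legitimize the reduction of the qualitative predictability question (existence of some $\Delta$) to the single concrete instance $\Delta=0$, which is precisely what allows the later decision procedures to focus on checking $0$-predictability.
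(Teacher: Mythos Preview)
Your proposal is correct and matches the paper's approach: the paper also presents Theorem~\ref{thm-pred} as an immediate consequence of Proposition~\ref{prop-lt} (together with the definition of predictability), without giving a separate detailed proof. There is nothing to add.
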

In the next section, we focus on the $\Delta$-predictability problem
for finite automata and discuss how it generalizes the previous notion
introduced by S.~Genc and S.~Lafortune in~\cite{automatica-genc-09}.
Section~\ref{sec-predic-ta} tackles the $\Delta$-predictability
problem for TA.

%%% Local Variables: 
%%% mode: latex
%%% TeX-master: "main"
%%% End: 

\section{Predictability for Discrete Event Systems}
\label{sec-predic-fa}
In this section, we address the predictability problems for discrete
event systems specified by FA. We first show that the definition of
predictability (Def.~\ref{def-predictability}) we introduced in
Section~\ref{sec-def-nsc} is equivalent to the original definition of
predictability by S.~Genc and S.~Lafortune
in~\cite{automatica-genc-09}.
 
\subsection{Original Definition of Predictability (S.~Genc and S.~Lafortune)}
Let $L_f = \Tr(\prefaulty_{\leq 0}(A))$ be the set of non-faulty
traces that can be extended with a fault in one step, and $L_{\neg f}
= \prefix{\Tr(\nonfaulty(A))}$ be the set of finite prefixes of
non-faulty traces.
S.~Genc and S.~Lafortune originally defined predictability for
discrete event systems in~\cite{automatica-genc-09} and we refer to
GL-predictability for this definition.  GL-predictability is defined
as follows\footnote{Technically S.~Genc and S.~Lafortune let $w$ range
  over $L_f.f$ and impose that $|t|_f = 0$; the definition we give in
  Equation~\eqref{eq-pred-lafortune} is equivalent to Definition~1
  of~\cite{automatica-genc-09}.}:
\begin{eqnarray} \label{eq-pred-lafortune} \exists n \in \setN,
  \forall w \in L_f, \exists t \in \overline{w} \text{ such that } {\mathbf P}(t)
\end{eqnarray}
with ${\mathbf P}(t)$ defined by:
\begin{eqnarray*} \label{eq-p} %P \equiv
  \mathbf{P}(t) : \forall u \in L_{\neg f}, \forall v \in \mathcal{L}(A)/u,
  \projs(u)=\projs(t) \wedge |v| \geq n \implies |v|_f > 0 \mathpunct.
\end{eqnarray*}
According to~\cite{automatica-genc-09}, $A$ is GL-predictable iff
Equation~\eqref{eq-pred-lafortune} is satisfied.
GL-predictability as defined by
Equation~\eqref{eq-pred-lafortune} is equivalent to our notion of
predictability:
\begin{theorem}\label{thm-equiv}
  $A$ is $G
L$-predictable iff $A$ is $0$-predictable.
\end{theorem}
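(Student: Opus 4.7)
The plan is to match the two definitions by reformulating them on the finite-state graph of $A$ and invoking a König-type argument. By Lemma~\ref{lemma-nsc-fin} applied with $\Delta = 0$, $A$ is $0$-predictable iff $\projs(L_f) \cap \projs(\prefix{L^\omega_{\neg f}}) = \varnothing$, i.e.\ no observation of a ``just-about-to-fault'' trace $w \in L_f$ coincides with the observation of a prefix of an \emph{infinite} non-faulty trace. GL-predictability, in contrast, asserts the existence of a \emph{uniform} bound $n$ such that for every $w \in L_f$, some prefix $t \in \prefix{w}$ has the property that every non-faulty $u$ with $\projs(u) = \projs(t)$ admits no non-faulty continuation of length $\geq n$. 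The bridge between ``no infinite non-faulty continuation'' and ``uniformly bounded non-faulty continuations'' is König's lemma, which applies because $A$ has finitely many states and every infinite run carries infinitely many non-$\varepsilon$ transitions (a standing assumption of the paper).

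For the ($0$-predictable $\Rightarrow$ GL) direction I would simply take $t = w$ in Equation~\eqref{eq-pred-lafortune}. Fixing any $w \in L_f$, Lemma~\ref{lemma-nsc-fin} guarantees that no non-faulty $u$ with $\projs(u) = \projs(w)$ extends to an infinite non-faulty trace; König's lemma then bounds the length of non-faulty continuations from each state $q$ at which such a $u$ ends by some $K_q$. Since $A$ has finitely many locations, $n = 1 + \max_q K_q$ is finite and yields a single bound valid for all $w \in L_f$ simultaneously, and $\mathbf{P}(w)$ holds with this $n$.

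For the (GL $\Rightarrow$ $0$-predictable) direction I would argue by contradiction. Assume GL with witness $n$ and, towards a contradiction, a pair $w \in L_f$ and $u \in \prefix{L^\omega_{\neg f}}$ with $\projs(w) = \projs(u)$ falsifying $0$-predictability. Pick $t \in \prefix{w}$ satisfying $\mathbf{P}(t)$; since $\projs(t)$ is a prefix of $\projs(w) = \projs(u)$, there is a prefix $u' \in \prefix{u}$ with $\projs(u') = \projs(t)$, and $u'$ inherits both non-faultiness and an infinite non-faulty extension from $u$. Taking any non-faulty continuation $v$ of $u'$ of length at least $n$ inside that extension contradicts $\mathbf{P}(t)$, since $|v|_f = 0$.

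The main obstacle I expect is the uniformization in the first direction: GL requires one $n$ that works for \emph{every} $w \in L_f$, whereas the hypothesis only supplies, for each $w$ separately, the absence of an infinite non-faulty extension. Extracting a uniform $n$ crucially uses that the end-states $q$ of the relevant non-faulty $u$'s lie in the finite location set of $A$ and that each $K_q$ is finite by König; the remaining manipulations are routine bookkeeping about prefixes and projections.
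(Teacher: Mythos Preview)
Your proposal is correct and follows essentially the same route as the paper's proof: in both directions you take $t = w$ and exploit the finite state space, and your (GL $\Rightarrow$ $0$-predictable) argument is identical to the paper's. The only cosmetic difference is in the (0-predictable $\Rightarrow$ GL) direction: you argue directly via K\"onig's lemma to produce the uniform bound $n$, whereas the paper argues contrapositively and invokes the pumping lemma with $n = |Q|$ to build the offending infinite non-faulty run---these are two phrasings of the same finiteness argument, and in fact your $K_q$'s are each at most $|Q|$ for exactly the pumping reason.
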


\subsection{Checking $k$-Predictability}
To check whether $A$ is $k$-predictable, $0 \leq k \leq \kappa(A)$, we
can use the NSC we established in Lemma~\ref{lemma-nsc-fin}: $A$ is
$k$-predictable iff $\projs(L^{-k}_f) \cap
\projs(\prefix{L^\omega_{\neg f}}) = \varnothing \mathpunct.$ To check
this condition, it suffices to build a \emph{twin plant} (similar
to~\cite{automatica-genc-09} and to what is defined for fault
diagnosis~\cite{yoo-lafortune-tac-02}).
We define two automata $A_1(k)$ and $A_2$ that accept
$\projs(L^{-k}_f)$ and $\projs(\prefix{L^\omega_{\neg f}})$ and
synchronize them to check whether the intersection is empty.
%
% We propose to reduce predictability checking to a problem of B\"uchi
% emptiness of the synchronous product of two automata.  While
% conceptually simpler reductions would be possible, we choose this
% one for consistency with the section dedicated to timed automata.
%
The first automaton $A_1(k)$ accepts finite words 
which are in $\projs(L_f^{-k})$ and is defined as follows:
\begin{enumerate}
\item in $A$, we compute the set of states $F_k$ that can reach a
  state where $f$ is enabled within $k$ steps (this can be done in
  linear time using a backward breadth-first search from states where
  $f$ is enabled.)
\item $A_1(k)$ is a copy of $A$ where the set of final states is $F_k$,
  and every $a \not\in \Sigma_o$ is replaced by $\varepsilon$.
\end{enumerate}
It follows that $A_1(k)$ accepts $\projs(L^{-k}_f)$.
 
% followed by any infinite word in $\Sigma_o^\omega$.  
% We define $R_1 = \{q \in Q \mid f \in \enabled(q)\}$ 
% the set of states in which the fault is enabled.  
% Formally, $A_1 = (Q,q_0,\Sigma_o \cup \{\varepsilon\},\delta_u \cup \delta_o \cup \delta_+,R_1,Q)$ 
% where 
% $\delta_u = \{ q \xrightarrow{\, \varepsilon\,} q' \mid 
% \exists a \in \Sigma \setminus \Sigma_o.\ q \xrightarrow{\, a\,} q' \in \delta\}$ (replaces unobservable events with $\varepsilon$), 
% $\delta_o = \delta \cap (Q \times \Sigma_o \times Q)$ 
% (keeps the observable transitions), 
% $\delta_+ = \{ q \xrightarrow{\, e \,} q 
% \mid q \in R_1 \land e \in \Sigma_o\}$ 
% (adds observable loops from states enabling $f$).  

The second automaton $A_2$ accepts $\projs(\prefix{L^\omega_{\neg
    f}})$.  To compute it, we merely need to compute the states from
which there is an infinite path without any state where $f$ is
enabled.  This can be done in linear time again (\eg computing the
states that satisfy the CTL formula $\textsf{EG} \neg \enabled(f)$.)
$A_2$ is defined as follows:
\begin{enumerate}
\item let $F_{\neg f}$ be the set of states in $A$ from which there
  exists an infinite path with no states where $f$ is enabled.
\item $A_2$ is a copy of $A$ restricted to the set of states $F_{\neg
    f}$, and every $a \not\in \Sigma_o$ is replaced by $\varepsilon$
  (this implies that the target state of the $f$ transitions cannot be
  in $A_2$).
\end{enumerate}
From the previous construction with sets of accepting states $F_k$ for
$A_1(k)$ and $F_{\neg f}$ for $A_2$ (every state in $A_2$ is
accepting), $\lang^*(A_1(k) \times A_2) = \projs(L^{-k}_f) \cap
\projs(\prefix{L^\omega_{\neg f}})$ and we can check
$k$-predictability in quadratic time in the size of $A$.

\begin{example}
  For the untimed version of Automaton $G$ (Fig.~\ref{fig-example},
  page~\pageref{fig-example}), we obtain $G_1(0)$ and $G_2$ as
  depicted on Fig.~\ref{fig-construction}.  Recall that $d$ is
  unobservable.
\end{example}

\begin{figure}[hbtp]
  \centering
\subfigure[$G_1(0)$]{
 \begin{tikzpicture}[thick,node distance=0.8cm and .8cm]% 
    \small
    \node[state,initial] (q_0)  {$l_0$}; 
    \node[state] (q_1) [below right =of q_0] {$l_1$};
    \node[state,acc] (q_2) [below right=of q_1] {$l_2$};  
    \path[->] (q_0) edge node {$a$} (q_1)
    (q_1) edge node {$c$} (q_2)
    (q_2) edge[loop left] node {$a,b,c$} (q_2); 
  \end{tikzpicture}
}
\subfigure[$G_2$]{
\begin{tikzpicture}[thick,node distance=0.8cm and .8cm]% 
    \small
    \node[state,initial] (q_0)  {$l_0$}; 
    \node[state] (q_3) [below right=of q_0] {$l_3$};
    \node[state,acc] (q_4) [below right=of q_3] {$l_4$};  
    \path[->] (q_0) edge node {$\varepsilon$} (q_3)
    (q_3) edge node {$a$} (q_4)
    (q_4) edge[loop left] node {$b$} (q_4); 
  \end{tikzpicture}
}
\subfigure[$G_1(0) \times G_2$]{
\begin{tikzpicture}[thick,node distance=0.8cm and 1.2cm]% 
    \small
    \node[state,initial,rectangle, rounded corners] (q_0)  {$l_0,l_0$}; 
    \node[state,rectangle, rounded corners] (q_3) [below right=of q_0] {$l_0,l_3$};
    \node[state,rectangle, rounded corners] (q_4) [below right=of q_3] {$l_1,l_4$};  
    \path[->] (q_0) edge[swap] node {$\varepsilon$} (q_3)
    (q_3) edge[swap] node {$a$} (q_4);
  \end{tikzpicture}
}
  \caption{Construction of $G_1$ and $G_2$ for automaton $G$
    ((Fig.~\ref{fig-example})}
  \label{fig-construction}
\end{figure}

Computing the largest $k$ such that $A$ is $k$-predictable can also be
done in quadratic time.  In $A$, we can compute, in linear
time\footnote{\eg standard breadth-first search~\cite{cormen-2009} on
  $A$.}, the shortest distance $d_f(q)$ (going backwards) from $q$ to
a state where $f$ is enabled (it is $\infty$ if $q$ is unreachable
going backwards in $A$).  In the product $A_1(k) \times A_2$, if there
is a run from the initial state to $(s_1,s_2)$ and $d(s_1) = k', k'
\leq k$, this implies that $A$ is not $k'$-predictable.  To determine
the largest $k$ such that $A$ is $k$-predictable, it suffices to
perform the following steps:
\begin{enumerate}
\item compute the shortest distance $d_f(q)$ to an $f$-enabled state
  for each $q \in Q$;
\item build the product $A_1(0) \times A_2$;
\item let $S$ be the set of reachable states in $A_1(0) \times A_2$
  and $M = \min_{(s_1,s_2) \in S} d_f(s_1)$.
\end{enumerate}
The largest $k$ such that $A$ is $k$-predictable is $M-1$.

%\FC{Example $G_1$}

\begin{example}
  On automaton $G$ of Fig.~\ref{fig-example}: $d(l_2)=0$, $d(l_1)=1$,
  $d(l_0)=2$, $d(l_3)=d(l_4)=\infty$.  The minimum value reachable in
  $G_1(0) \times G_2$ is obtained for $l_1$ and is $d(l_1)=1$. Thus
  $G$ is $0$-predictable.
\end{example}
%\subsection{Examples}

%%% Local Variables: 
%%% mode: latex
%%% TeX-master: "main"
%%% End: 

\section{Predictability for Timed Automata}
\label{sec-predic-ta}

In this section we address the predictability problems for TA.  We
first rewrite the NSC of Lemma~\ref{lemma-nsc-fin} using infinite
languages.  This enables us 1) to deal with time-divergent runs and 2)
to design an algorithm to solve the predictability problems for TA.

\subsection{Checking $\Delta$-Predictability}
\label{sec-check-pred-ta}
We can reformulate Lemma~\ref{lemma-nsc-fin} without the prefix
operator by extending $L^{-\Delta}_f$ into an equivalent language of
infinite words: let $L^{\omega,-\Delta}_f = L^{-\Delta}_f.(\Sigma_o
\times \setR_{\geq 0} )^\omega$.
\begin{lemma}\label{lemma-inf}
 $ \projs(L^{-\Delta}_f) \cap
  \projs(\prefix{L^\omega_{\neg f}}) = \varnothing
  \iff \projs(L^{\omega,-\Delta}_f) \cap
  \projs(L^\omega_{\neg f}) = \varnothing$.
\end{lemma}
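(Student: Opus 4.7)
The plan is to prove both directions of the equivalence by contraposition, exploiting the fact that $L^{\omega,-\Delta}_f = L^{-\Delta}_f \cdot (\Sigma_o \times \setR_{\geq 0})^\omega$ is obtained by appending arbitrary infinite observable tails to finite prefaulty traces, and that $\prefix{L^\omega_{\neg f}}$ is by definition the set of finite prefixes of infinite non-faulty traces. Since projection $\projs$ commutes with concatenation (the durations are simply summed across erased events), matching witnesses on one side translate into matching witnesses on the other.

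For the direction $(\Leftarrow)$, assume the left-hand side of Lemma~\ref{lemma-nsc-fin} is non-empty and pick $t \in \projs(L^{-\Delta}_f) \cap \projs(\prefix{L^\omega_{\neg f}})$. Choose witnesses $u_1 \in L^{-\Delta}_f$ and $u_2 \in \prefix{L^\omega_{\neg f}}$ with $\projs(u_1)=\projs(u_2)=t$. By definition of $\prefix{L^\omega_{\neg f}}$ there is $\hat{u}_2 = u_2.v \in L^\omega_{\neg f}$; by the standing assumption that every infinite run has infinitely many $\Sigma_o$-events, $\projs(v)$ is an infinite timed word over $\Sigma_o$, i.e.\ an element of $(\Sigma_o \times \setR_{\geq 0})^\omega$. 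Then $u_1.\projs(v) \in L^{-\Delta}_f \cdot (\Sigma_o \times \setR_{\geq 0})^\omega = L^{\omega,-\Delta}_f$, and $\projs(u_1.\projs(v)) = t.\projs(v) = \projs(\hat{u}_2)$, so this common projection witnesses non-emptiness of the right-hand side.

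For the direction $(\Rightarrow)$, assume the right-hand side is non-empty and pick an infinite $w \in \projs(L^{\omega,-\Delta}_f) \cap \projs(L^\omega_{\neg f})$. Write $w = \projs(u.v')$ with $u \in L^{-\Delta}_f$ and $v' \in (\Sigma_o \times \setR_{\geq 0})^\omega$, and also $w = \projs(w_2)$ for some $w_2 \in L^\omega_{\neg f}$. Let $t = \projs(u)$, a finite timed word, and observe that $t$ is a prefix of $w = \projs(w_2)$. Because $\projs$ is monotone on prefixes and $w_2$ is an infinite word whose full projection strictly extends $t$, there is a finite prefix $w'_2$ of $w_2$ with $\projs(w'_2) = t$. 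Then $w'_2 \in \prefix{L^\omega_{\neg f}}$, so $t$ witnesses non-emptiness of the left-hand side.

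The main (minor) obstacle is the step in $(\Rightarrow)$ where a finite prefix of $w_2$ projecting exactly to $t$ must be extracted: this requires noting that as one reads $w_2$ left to right, accumulating only observable events and the corresponding elapsed durations, the running projection grows monotonically along prefixes of $\projs(w_2)$, so some finite prefix of $w_2$ projects to the given finite prefix $t$. Apart from this, the argument is essentially algebraic manipulation of concatenations under $\projs$, and symmetric to the standard infinitary reformulation used for diagnosability in~\cite{tripakis-02}.
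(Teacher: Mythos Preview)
Your proof is correct and follows essentially the same contrapositive argument as the paper: in one direction you extend the finite prefaulty witness by the (projected) observable tail of the non-faulty infinite word, and in the other you truncate the non-faulty infinite word to a finite prefix whose projection matches $\projs(u)$. If anything, your version is slightly cleaner in that you explicitly append $\projs(v)$ rather than $v$ to land in $L^{-\Delta}_f\cdot(\Sigma_o\times\setR_{\geq 0})^\omega$, and you flag the prefix-extraction step that the paper simply asserts.
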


\medskip

To check $\Delta$-predictability we build a product of timed automata
$A_1(\Delta) \times A_2$, and reduce the problem to B\"uchi emptiness
on this product.  This construction is along the lines of the
\emph{twin plant} introduced
in~\cite{yoo-lafortune-tac-02,tripakis-02}.  The difference in the
predictability problem lies in the construction of $A_1(\Delta)$ which is
detailed later.  The twin plant idea is the following:
%The intended behaviour of the product is:
\begin{itemize}
\item $A_1(\Delta)$ accepts $\projs(L^{\omega,-\Delta}_f)$ \ie
  (projections of) infinite timed words of the form $w.(\setR_{\geq 0}
  \times \Sigma_o)^\omega$ with $w \in L^{-\Delta}_f$;
  %$A_1$ has a single repeated location $\text{END}$ ($inv(END) = y \leq 1$;
\item $A_2$ accepts $\projs(L^\omega_{\neg
    f})$ % is a copy of $A$ with the $f$-transitions removed and
  \ie (projections of) infinite non-faulty timed words in
  $L^\omega_{\neg f}$;
\item the product $A_1(\Delta) \times A_2$ accepts the language
  $\projs(L^{\omega,-\Delta}_f) \cap \projs(L^\omega_{\neg f})$;
\item thus checking $\Delta$-predictability of $A$ reduces to B\"uchi
  emptiness checking on the product $A_1(\Delta) \times A_2$.
\end{itemize}
%
%\newpage

% \begin{figure}[hbtp]
% \includegraphics[width=\textwidth]{fig-A1.pdf}
% \caption{Construction of Automaton $A_1$}
% \label{fig-a1}
% \end{figure}
\begin{figure}[thbtp] %{5.2cm}
  \centering
  \scalebox{0.7}{
  \begin{tikzpicture}[thick,node distance=2.3cm and 2.6cm]% 
\tikzset{initial text={},
    every state/.style={rectangle, rounded corners,
      minimum size=.6cm,draw=black!50,very thick,
    %  fill=blue!20
    },
    acc/.style={double distance=1.5pt},
    on grid,auto,
    bend angle=25}
    \small

    \node[state] (l) {$l$};
    \node[state,right=of l] (l1) {$l_1$};
    \node[state,below=of l] (ltilde) {$\tilde{l}$};
    \node[state,right=of ltilde] (l1tilde) {$\tilde{l}_1$};

    \node[state,right=of l1,xshift=1cm] (lprime) {$l'$};
    \node[state,right=of lprime] (lf) {$l_f$};

   \node[state,right=of l1tilde,xshift=1cm] (lprimetilde) {$\tilde{l}'$};
    \node[state,right=of lprimetilde,label=below:{$[y \leq 1]$}] (end) {END};
    \node[state,right=of end,label=below:{$[y \leq 0]$}] (nz) {NZ};

    \node[left=of l,xshift=1cm,yshift=.3cm] (foo1) {};
    \node[below=of foo1] (foo2) {};

    \node[right=of l1,xshift=-.8cm,yshift=.3cm] (foo3) {}; 
    \node[below=of foo3] (foo4) {}; 
    
    \node[right=of lprime,xshift=-2.2cm] (foo5) {\LARGE $\sf X$};

    \path[->] (foo1) edge[pos=0.1] node {$g_1,a_1,r_1$} (l)   
    (foo2) edge[pos=0.1] node {$g_1,\varepsilon,r_1$} (ltilde)
    (l) edge node {$g_2,a_2,r_2$} (l1)
    (ltilde) edge node {$g_2,\varepsilon,r_2$} (l1tilde)
    (lprime) edge node {$g,f$} (lf)
    (lprimetilde) edge node {\stack{$g \wedge y \leq \Delta,\varepsilon$\\$y:=0$}} (end)
    (end) edge[bend left] node {\stack{$y=1,\varepsilon$\\$y:=0$}} (nz)
    (nz) edge[bend left] node {$\varepsilon$} (end)
    (l1) edge[dashed] (foo3)
    (foo3) edge[dashed] (lprime)
    (l1tilde) edge[dashed] (foo4)
    (foo4) edge[dashed] (lprimetilde)
    (ltilde) edge[loop below] node[label=right:{$\Sigma_o$}]  {} (ltilde)
    (l1tilde) edge[loop below] node[label=right:{$\Sigma_o$}] {}  (1ltilde)
    (lprimetilde) edge[loop below] node[label=right:{$\Sigma_o$}] {} (lprimetilde)
    (end) edge[loop above] node {$\Sigma_o$} (end)
    (nz) edge[loop above] node {$\Sigma_o$} (nz)
    (l) edge[bend right,swap,pos=0.4] node {$y:=0$} (ltilde)
    (l1) edge[bend right,swap,pos=0.4] node {$y:=0$} (l1tilde)
    (lprime) edge[bend right,swap,pos=0.4] node {$y:=0$} (lprimetilde)
    ;

\coordinate (nz') at ($(nz.west) + (0.4,1.35)$);
\coordinate (foo2') at ($(ltilde.east) - (0.1,0.9)$);
\coordinate (nz') at ($(nz.west) + (0.4,1.2)$);
\begin{pgfonlayer}{background}
\node [draw=blue!80!black,fill=blue!10!white,opacity=0.8,
rounded corners, xshift=.2cm, yshift=0.05cm, fit= (foo2) (foo2') (nz) (nz')] {};
\end{pgfonlayer}

\iffalse

    \node[state,initial] (q_0) [label=-90:{$[x \leq 8]$}] {$l_0$}; 
    \node[state] (q_1) [below right=of q_0,label=-90:{$[x \leq 1]$}] {$l_1$};
    \node[state,secret] (q_2) [above right=of q_0] {};
%
    \path[->] (q_0) edge node {$6 \leq x \leq 8$,$f$} (q_2) 
              (q_0)  edge node[swap,pos=0.7]  {\stack{$x < 1$ \\ $\varepsilon$\\$x:=0$}} (q_1)
              (q_1) edge[loop above] node {$x=1$; $a$ ;$x:=0$} (q_1)
              (q_2) edge[loop above] node {$a$} (q_2);
\fi
  \end{tikzpicture}
}
\caption{Construction of Automaton $A_1$}
\label{fig-a1}
\end{figure}


$A_1(\Delta)$ itself is made of two copies of $A$: the original $A$
and a twin copy (see Fig.~\ref{fig-a1}). $A_1$ starts in the initial
location of $A$, $\ell_0$, and at some point in time switches to the
twin copy (grey area on Fig.~\ref{fig-a1}).  The purpose of the twin
copy is to extend the previously formed timed word with a timed word
of duration less than $\Delta$ time units that reaches a state where
$f$ is enabled. The actions performed in the copy do not matter as we
only have to check that $f$ is reachable within $\Delta$ time units
since we switched to the copy.  In this case the timed word built in
the original $A$ is in $L^{-\Delta}_f$.

$A_1(\Delta)=(L \cup \tilde{L} \cup \{ \text{END} \}, l^1_0,X \cup \{
y \}, \Sigma \cup \{ \varepsilon\}, E_1, \inv_1, \varnothing, \{
\text{END} \})$ is formally defined as follows\footnote{For now ignore
  the NZ location in the Figure and the invariants $[y \leq k]$. Their
  sole purposes is to ensure time-divergence.} (see
Fig.~\ref{fig-a1}):
\begin{itemize}
\item $\tilde{L} = \{ \tilde{\ell}, \ell \in L\}$ is the set of twin
  locations; % $A_1$ can switch from a location $l \in L$ to its twin location
%   $\tilde{l} \in \tilde{L}$ at any point in time, and doing so it resets
%   the fresh clock $y$.
\item $l^1_0 = l_0$; $A_1(\Delta)$ starts in the same initial state as
  $A$.
\item $ \inv_1(\ell) = \inv_1(\tilde{\ell}) = \inv(\ell)$; invariants are the
  same as in the original automaton $A$ including the twin locations;
\item the transition relation is defined as follows:
  \begin{itemize}
  \item original transitions of $A$: $(\ell,g,a',R,\ell') \in E_1$ iff
    $(\ell,g,a,R,\ell') \in E$ and $a \in \Sigma_o \setminus \{ f \}$;
    $a'=a$ if $a \in \Sigma_o$ and $a'=\varepsilon$ otherwise; this
    renaming hides the unobservable events by renaming them in
    $\varepsilon$.
  \item transitions to the twin locations:
    $(\ell,\true,\varepsilon,\{y\},\tilde{\ell}) \in E_1$ for each $\ell
    \in L$; $A_1$ can switch to the twin copy at any time and doing so
    preserves the values for the clocks in $X$ but resets $y$;
  \item equivalent unobservable transitions inside the twin copy:
    $(\tilde{\ell},g,\varepsilon,R,\tilde{\ell}_1) \in E_1$ iff
    $(\ell,g,a,R,\ell_1) \in E$ for some $a \neq f$;
  \item equivalent of $f$-transitions in the twin copy:
    $(\tilde{\ell}',g \wedge y \leq \Delta,\varepsilon,R,\text{END})
    \in E_1$ iff $(\ell',g,f,R,l_f) \in E$.
  \item loop transitions on observable events in the twin copy:
    $(\tilde{\ell},\true,a,\varnothing,\tilde{\ell}) \in E_1$ for each
    $a \in \Sigma_o$. This enables $A_2$ (defined below) to synchronize
    with $A_1$ on $\Sigma_o$ after $A_1$ has chosen to switch to the
    twin copy of $A$.
  \end{itemize}
\end{itemize}
Finally, $A_2$ is simply of copy of $A$ without the $f$-transitions
and the clocks are renamed to be local to $A_2$. Every location in
$A_2$ is a \emph{repeated} location.  Notice that the only repeated
location in $A_1(\Delta)$ is END.  By definition of the synchronized
product, $\lang^\omega(A_1(\Delta) \times A_2) =
\lang^\omega(A_1(\Delta)) \cap \lang^\omega(A_2)$.
\begin{lemma}\label{lemma-equiv}
  $\projs(L^{\omega,-\Delta}_f) \cap \projs(L^\omega_{\neg f}) =
  \lang^\omega(A_1(\Delta) \times A_2)$.
\end{lemma}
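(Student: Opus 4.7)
The plan is to prove the identity by first isolating the languages of the two component automata and then invoking the product property $\lang^\omega(A_1(\Delta) \times A_2) = \lang^\omega(A_1(\Delta)) \cap \lang^\omega(A_2)$ from Definition~\ref{def-prod-sync}. Specifically, I would establish the two intermediate identities
\[
\lang^\omega(A_1(\Delta)) = \projs(L^{\omega,-\Delta}_f) \quad\text{and}\quad \lang^\omega(A_2) = \projs(L^\omega_{\neg f}),
\]
after which the lemma follows immediately by intersecting.

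For $A_2$, which is a clock-renamed copy of $A$ with the $f$-transitions removed (unobservable events being treated as $\varepsilon$ in the product, as in $A_1$), accepting runs are in bijection with infinite non-faulty runs of $A$ because every location is repeated. The trace of such a run is exactly the projection on $\Sigma_o$ of the corresponding non-faulty trace of $A$, which yields $\lang^\omega(A_2) = \projs(L^\omega_{\neg f})$.

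For $A_1(\Delta)$, the key observation is that the unique B\"uchi location is END, and once END is reached, the END/NZ gadget keeps the run confined there forever while allowing arbitrary $\Sigma_o$-sequences interleaved with the mandatory round-trip that forces time divergence. Moreover END is reachable only through the twin copy, which itself is entered by exactly one silent switch transition that resets $y$ but preserves all clocks of $X$. Hence every accepting run of $A_1(\Delta)$ decomposes uniquely into three phases: (i) a prefix in the original copy, whose transitions mirror those of a finite non-faulty run $\varrho$ of $A$ with unobservable events renamed to $\varepsilon$, producing the observed trace $\projs(\tr(\varrho))$; (ii) a silent extension $\varrho'$ in the twin copy, where the clock $y$ measures its duration and the terminating transition into END is the $\varepsilon$-image of an $f$-transition of $A$, guarded by $y \leq \Delta$; and (iii) an infinite time-divergent $\Sigma_o$-sequence produced in the END/NZ gadget. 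Reading this back, the concatenation $\varrho \cdot \varrho'$ witnesses $\tr(\varrho) \in L^{-\Delta}_f$, and appending any phase~(iii) suffix lands the full trace in $\projs(L^{\omega,-\Delta}_f)$. Conversely, given any word of $\projs(L^{\omega,-\Delta}_f)$, I construct an accepting run of $A_1(\Delta)$ by reversing this decomposition: simulate the prefaulty prefix in the original copy, take the silent switch exactly when $\varrho$ ends, simulate $\varrho'$ in the twin copy, enter END via the $\varepsilon$-image of the $f$-transition (enabled since $\dur(\varrho') \le \Delta$), and replay the $\Sigma_o^\omega$ suffix by interleaving the self-loops with the NZ round-trip.

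The main obstacle will be the bookkeeping of time across the silent switch and the faithful realization of an arbitrary time-divergent $\Sigma_o^\omega$ suffix by the END/NZ gadget. In particular, I need to verify that the clocks of $X$ are preserved by the switch so that the guards in the twin copy coincide with those of $A$, that $y$ is reset there and grows precisely with $\dur(\varrho')$, and that the mandatory NZ round-trip can always be scheduled without constraining the prescribed observable suffix thanks to the $\Sigma_o$ self-loops on both END and NZ and the resets of $y$ on the NZ edges. Once these technicalities are discharged, the step from the two intermediate identities to the lemma is routine from Definition~\ref{def-prod-sync}, which imposes synchronization exactly on $\Sigma_o$ events while allowing independent $\varepsilon$-moves, so that $\lang^\omega(A_1(\Delta) \times A_2)$ realizes precisely the intersection on the observed alphabet.
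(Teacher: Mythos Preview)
Your plan of proving $\lang^\omega(A_1(\Delta)) = \projs(L^{\omega,-\Delta}_f)$ and $\lang^\omega(A_2) = \projs(L^\omega_{\neg f})$ separately and then intersecting via the product property is sound, and it is essentially a cleaner factoring of the paper's own argument, which works directly on the product but relies on the same run decomposition of $A_1(\Delta)$.

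There is, however, a genuine gap in your argument for the inclusion $\projs(L^{\omega,-\Delta}_f) \subseteq \lang^\omega(A_1(\Delta))$. You treat phase~(ii) as a purely \emph{silent} simulation of the witness $\varrho'$ in the twin copy, and only from END onward do you ``replay the $\Sigma_o^\omega$ suffix''. But that suffix is arbitrary: its first observable events may be due strictly before time $\dur(\varrho')$ has elapsed after the switch, hence \emph{before} END can be reached, and the run you construct then fails to have the required trace. This is precisely why $A_1(\Delta)$ also carries the self-loops $(\tilde\ell,\true,a,\varnothing,\tilde\ell)$ for every $a\in\Sigma_o$ at each twin location $\tilde\ell$ --- the paper explicitly notes that their purpose is to let $A_2$ synchronize with $A_1$ after the switch. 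The correct construction (and the one the paper uses in its $\subseteq$ direction) interleaves the $\varepsilon$-moves mimicking $\varrho'$ with these twin-copy self-loops, absorbing whichever observable events of the suffix fall before END is entered; since these loops have guard $\true$ and empty reset set, they leave the $X$-clocks, $y$, and the current invariant untouched, so the interleaving is always schedulable. The same oversight touches your forward direction: phase~(ii) of a general accepting run of $A_1(\Delta)$ need not be silent, and you must separate its $\varepsilon$-moves (which form the witness $\varrho'$ in $A$) from the interleaved self-loop events (which belong to the $\Sigma_o$-suffix).
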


\begin{theorem}
  Problems~\ref{prob-delta-pred} and~\ref{prob-pred} are
  PSPACE-complete.
\end{theorem}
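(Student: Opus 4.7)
The plan is to combine the algorithmic characterisations of Lemmas~\ref{lemma-nsc-fin},~\ref{lemma-inf} and~\ref{lemma-equiv} with the classical PSPACE-completeness of B\"uchi emptiness for timed automata~\cite{AlurDill94}, and to invoke the hardness reduction already established in Theorem~\ref{rem-reduce-reach}.

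For PSPACE-hardness, Problem~\ref{prob-delta-pred} is directly handled by Theorem~\ref{rem-reduce-reach}. For Problem~\ref{prob-pred}, I would combine Proposition~\ref{prop-lt} with Theorem~\ref{thm-pred} to conclude that predictability coincides with $0$-predictability; the same reduction, instantiated with $\Delta = 0$, then yields PSPACE-hardness of Problem~\ref{prob-pred}.

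For PSPACE-membership, by Lemmas~\ref{lemma-nsc-fin},~\ref{lemma-inf} and~\ref{lemma-equiv}, $A$ is $\Delta$-predictable iff $\lang^\omega(A_1(\Delta) \times A_2) = \varnothing$. I would check that the timed B\"uchi automaton $A_1(\Delta) \times A_2$ has polynomial size in the input: a number of locations of order $(2|L|+2)\cdot|L|$, a number of clocks equal to $2|X|+1$ (two disjoint copies of $X$ plus the fresh clock $y$), and only one new integer constant $\Delta$, whose binary encoding is polynomial in the size of the input. Since B\"uchi emptiness for timed automata is PSPACE-complete~\cite{AlurDill94} --- a nondeterministic polynomial-space procedure guesses on-the-fly a lasso in the region graph that revisits the repeated location END along its cycle --- Problem~\ref{prob-delta-pred} lies in PSPACE. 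Applying Theorem~\ref{thm-pred} once more, Problem~\ref{prob-pred} reduces to Problem~\ref{prob-delta-pred} with $\Delta = 0$ and is therefore also in PSPACE.

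The one subtlety I want to be careful about is that $A_1(\Delta) \times A_2$ really has polynomial size in the input: the constant $\Delta$ is given in binary and appears only in the single guard $y \leq \Delta$ of the twin copy, contributing merely $O(\log \Delta)$ extra bits of description, and the Alur--Dill region graph, although of exponential size, is traversed in polynomial space in the standard on-the-fly way. Once these syntactic size bounds are confirmed, both containments follow immediately and the theorem is complete.
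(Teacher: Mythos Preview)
Your proposal is correct and follows essentially the same route as the paper: PSPACE-hardness via Theorem~\ref{rem-reduce-reach}, PSPACE-membership via Lemma~\ref{lemma-equiv} together with the PSPACE B\"uchi-emptiness test of~\cite{AlurDill94}, and the reduction of Problem~\ref{prob-pred} to the case $\Delta=0$ via Theorem~\ref{thm-pred}. Your explicit size analysis of $A_1(\Delta)\times A_2$ and the remark on the binary encoding of $\Delta$ are additional care the paper leaves implicit, but the overall argument is the same.
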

\begin{proof}
  PSPACE-easiness of Problem~\ref{prob-delta-pred} is established as
  follows: checking B\"uchi emptiness for timed automata is in
  PSPACE~\cite{AlurDill94}.  The product $A_1(\Delta) \times A_2$ has
  size polynomial in the size of $A$ and thus checking B\"uchi
  emptiness of the product is in PSPACE as well.
  Problem~\ref{prob-delta-pred} is thus in PSPACE.
  By Theorem~\ref{thm-pred}, Problem~\ref{prob-pred} is in PSPACE
  as well.

  Theorem~\ref{rem-reduce-reach} states PSPACE-hardness for
  Problem~\ref{prob-delta-pred}.
 % Reachability is
%   PSPACE-hard~\cite{AlurDill94} for TA and so is
%   Problem~\ref{prob-pred}.
%
  As $0$-predictability \ie Problem~\ref{prob-delta-pred}, is
  equivalent to Problem~\ref{prob-pred}, it is PSPACE-hard as well.
\qed
\end{proof}

\subsection{Restriction to Time-Divergent Runs of $L^{\omega}_{\neg
    f}$}\label{sec-zeno}
To deal with time-divergence and enforce the runs in $L^{\omega}_{\neg
  f}$ to have infinite duration (see Remark~\ref{rem-zeno}), we can
add another automaton in the product with a B\"uchi condition that
enforces time-divergence (this is how this kind of requirements is
usually addressed).  In our setting, we can re-use the fresh clock $y$
of $A_1(\Delta)$ after location END is visited: it is not useful
anymore to check whether a timed word is in $L^{-\Delta}_f$.  The
modifications to $A_1(\Delta)$ required to ensure time-divergence in
$A_2$ are the following:
\begin{itemize}
\item add a new location NZ, which is now the repeated location of
  $A_1(\Delta)$;
\item add two transitions as depicted on Fig.~\ref{fig-a1} between END
  and NZ.
\end{itemize}
This way infinite timed words accepted by $A_1(\Delta)$ must be
time-divergent and with the synchronization with $A_2$ this forces the
runs of $A_2$ to be time-divergent.

%\subsection{Computing the Maximum Anticipation}
\medskip

Finally, once we know how to solve Problem~\ref{prob-delta-pred}, we
can compute the optimal (maximum) anticipation delay by performing a
binary search on the possible values of $0 \leq \Delta \leq
\kappa(A)$.

%However, we can also use another technique to compute this value
%directly.

\subsection{Implementability of the $\Delta$-Predictor}
In the previous sections, we defined a predictor as a mapping from
timed words to $\{0,1\}$.  To build an implementation of this mapping
(an actual predictor) we still have some key problems to address: 1)
we have to recognize when a timed word is in $L^{-\Delta}_f$; and 2)
we have to detect that a timed word is in $L^{-\Delta}_f$ as soon as
possible.
S.~Tripakis addressed similar problems in~\cite{tripakis-02} in the
context of fault diagnosis where a \emph{diagnoser} is given as an
algorithm that computes a state estimate of the system after reading a
timed word $w$.  The diagnoser updates its status after the occurrence
of an observable event or after a \emph{timeout} (TO) has occurred,
which means some timed elapsed since the last update and no observable
event occurred.  The value of the timeout period (TO) is required to
be less than the minimum delay between two observable events to ensure
that the diagnoser works as expected.
However, point 2) above still poses problem in our context, as
demonstrated by the TA $\calB$ of Fig.~\ref{fig-ex-sample}.

\noindent The set of observable events is $\{a\}$ and $\calB$ is
$4$-predictable. To see this, define the predictor $P$ as follows: for
a timed word $w = \delta.w'$ with $\delta \geq 2$, $P(w)=1$ and
otherwise $P(w)=0$.  Indeed if $2$ time units elapse and we see no
observable events, for sure the system is still is $l_0$ and thus a
fault $f$ is bound to happen, but not before $4$ time units.  An
implementation of a $4$-predictor has to observe the state of the
system \emph{exactly} at time $2$ otherwise it cannot predict the
fault $4$ time units in advance.

%\begin{figure}[hbtp]
\begin{floatingfigure}{5.2cm}
  \centering
  \begin{tikzpicture}[thick,node distance=1cm and 3cm]% 
    \small
    \node[state,initial] (q_0) [label=-90:{$[x \leq 8]$}] {$l_0$}; 
    \node[state] (q_1) [below right=of q_0,label=-90:{$[x \leq 1]$}] {$l_1$};
    \node[state,secret] (q_2) [above right=of q_0] {};
    \path[->] (q_0) edge node {$6 \leq x \leq 8$,$f$} (q_2) 
              (q_0)  edge node[swap,pos=0.7]  {\stack{$x < 1$ \\ $\varepsilon$\\$x:=0$}} (q_1)
              (q_1) edge[loop above] node {$x=1$; $a$ ;$x:=0$} (q_1)
              (q_2) edge[loop above] node {$a$} (q_2);
  \end{tikzpicture}
\caption{The Timed Automaton $\calB$}
\label{fig-ex-sample}
\end{floatingfigure}

%\end{figure}
\noindent  Now assume the platform on which we
implement the predictor can make an observation every $\frac{3}{5}$
time units.  The first observation of the predictor occurs at time
$\frac{3}{5}$; the third at $\frac{9}{5}$ and we cannot predict the
fault as we still don't know whether the system is in $l_0$ or has
made a silent move to $l_1$.  The next observation is at
$\frac{12}{5}$: if we have seen no $a$ so far, for sure the system is
in $l_0$ and we can predict the fault. However the fault may now occur
in $\frac{18}{5}$ time units \ie less than $4$ time units from the
current time.  Such a platform cannot implement the $4$-predictor.
The maximal anticipation delay we computed in the previous section is
thus an \emph{ideal} maximum that can be achieved by an \emph{ideal}
predictor that could monitor the system \emph{continuously}.
In a realistic system, there is a \emph{sampling rate}, or at least a
minimum amount of time between two observations~\cite{dewulf-hscc-04}.
In the sequel we address the \emph{sampling predictability problem}
that takes into account the speed of the platform.

\subsection{Sampling Predictability}
Let $\alpha \in \setQ$ and $L$ be a timed language.  We let $L \!\!
\mod \!\alpha$ be the set of timed words in $L$ with a duration
multiple of $\alpha$: $L \mod \alpha = \{ w \in L, \exists k \in
\setN, \dur(w) = k \cdot \alpha\}$.

Given a \emph{sampling rate} $\alpha \in \setQ$, the \emph{sampling}
predictability problem is defined by refining the definition of a
$\Delta$-predictor: an \emph{$(\alpha,\Delta)$-predictor} for $A$ is a
partial mapping $P: \tw^*(\Sigma_o) \!\! \mod \!\alpha \longrightarrow
\{0,1\}$ such that:
\begin{itemize}
\item $\forall w \in L^{-\Delta}_f \!\!\mod \!\alpha, P(\projs(w)) = 1$,
\item $\forall w \in \prefix{L^\omega_{\neg f}} \!\!\mod \!\alpha, P(\projs(w))=0$.
\end{itemize}
A timed automaton $A$ is $(\alpha,\Delta)$-predictable if there exists
a $(\alpha,\Delta)$-predictor for $A$ and is $\alpha$-predictable is
there is some $\Delta$ such that $A$ is $(\alpha,\Delta)$-predictable.
\begin{remark}
  The problem of deciding whether there exists a sampling rate
  $\alpha$ such that $A$ is $\alpha$-predictable is also interesting
  but very likely to be undecidable as the existence of a
  \emph{sampling rate} s.t. a location is reachable in a TA is
  undecidable~\cite{cassez-hscc-02}.
\end{remark}
The solution to the sampling predictability problem is a simple
adaptation of the solution we presented in
Section~\ref{sec-predic-ta}: in the construction of automaton
$A_1(\Delta)$ (Fig.~\ref{fig-a1}, page~\pageref{fig-a1}), it suffices
to restrict the transitions from the original $A$ to the twin copy
(those resetting $y$) to happen at time points multiple of
$\alpha$. This can be achieved by adding a \emph{sampler} timed
automaton, and a common fresh clock, $s$, that \emph{sampler} resets
every $\alpha$ time units. The transitions resetting $y$ in $A_1$ are
now guarded by $s = 0$.

We can now safely define an implementation for an
$(\alpha,\Delta)$-predictor along the lines of the \emph{diagnoser}
defined in~\cite{tripakis-02}.  The implementation performs an
observation every $\alpha$ time units.  It computes a state estimate
of the system.  If one of the states in the state estimate can reach a
state where $f$ is enabled within $\Delta$ time units, the predictor
predicts $f$ and issue a $1$. Otherwise it issues $0$.  Computing a
representation of the state estimate as a set of polyedra is a
standard operation and can be done given an observed timed word $w$,
and the timed automaton model $A$.  Checking that one of the states in
the estimate can reach an $f$-enabled state within $\Delta$ time
units can also be done using standard reachability algorithm.  It can
be performed on-line or off-line by computing a polyedral
representation of this set of states.

\subsection{A Simple Example}
The example of Fig.~\ref{fig-ex-sample} %with various sampling rates
can be analyzed using \uppaal~\cite{uppaal-sttt}. \uppaal cannot check
for B\"uchi emptiness but in this example there is no \emph{Zeno}
non-faulty behaviours; thus we can restrict to a sufficiently large
horizon to check the condition of Lemma~\ref{lemma-inf}. 

The construction of the product $\calB_1(\Delta) \times \calB_2$
defined in Section~\ref{sec-check-pred-ta} for $\calB$ is depicted on
Fig.~\ref{fig-uppaal-ex}.  Assume the sampling rate is $\alpha =
\frac{q}{p}$. The rational rates must be encoded by scaling up the
constants in a network of TA as \uppaal only accepts integers to
compare clocks against.
We use the variables \texttt{qsRate} and \texttt{psRate} in the
\uppaal model for these two constants.  To obtain a network of TA with
integers, and \emph{sampling rate} $\alpha$, we multiply all the
constants by $p$ (this is standard in TA and scales up time such that
one time unit in the original automaton is $p$ time units in the
scaled up one).  We add one automaton \emph{sampler} that resets the
clock $s$ every $q$ time units. The transitions in $\calB_1(\Delta)$
that reset $y$ are now guarded by $s = 0$ which implies there can only
be taken at points in time which are multiples of $q$.  As mentioned
earlier we cannot check a B\"uchi condition with \uppaal and replace
it by a reachability condition on a sufficiently large horizon. Note
also that the $\Delta$ ($D$ in the \uppaal model) is multiplied by $q$
in the guard leading to END.  Synchronization is realized with a
broadcast channel for each observable event.

Given a value of $D$, the property we check is $P$: ``Can we reach END
in the product with global time larger than $M * p$''?  $M=10$ is
enough for our example.  If the answer is ``yes'' then the system is
not $(D \cdot\frac{p}{q})$-predictable, otherwise it is.

For a sampling rate $\alpha=\frac{3}{5}$, we get as expected that the
maximum $D$ for which $\calB$ is predictable is $6$.  Which means that
the actual maximal anticipation delay is $\Delta = 6 \cdot \frac{3}{5}
= \frac{18}{5}$ time units. And indeed, the first time we can check
that more than $2$ time units have elapsed is $\frac{12}{5}$ and thus
an interval of $\frac{18}{5}$ before $f$ can occur.
If we set $\alpha=1$ we get $D = \Delta=4$ meaning we can ideally predict
the fault $4$ time units in advance.

\begin{figure}[hbtp]
\includegraphics[width=.98\textwidth]{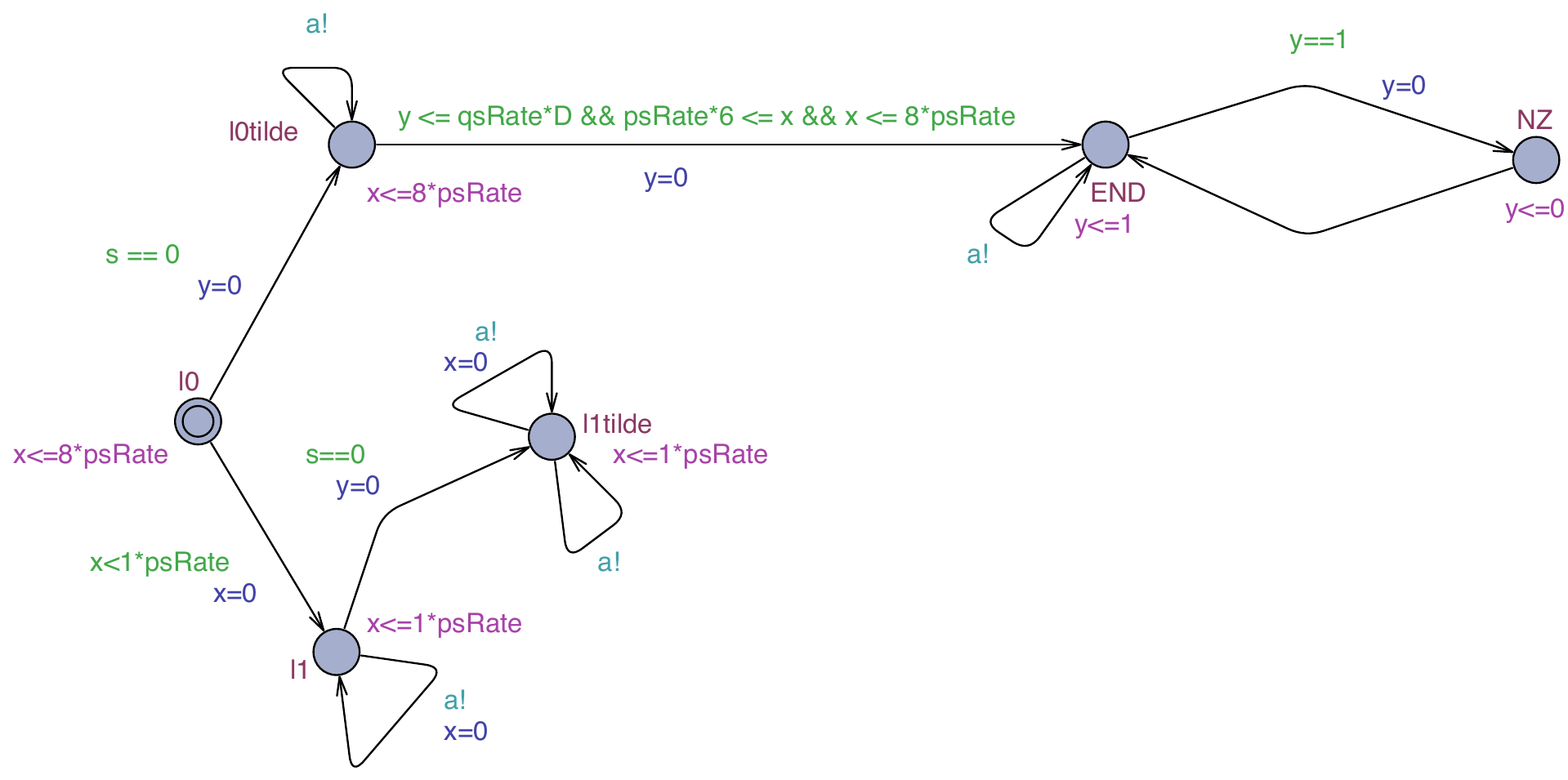}
\includegraphics[width=.6\textwidth]{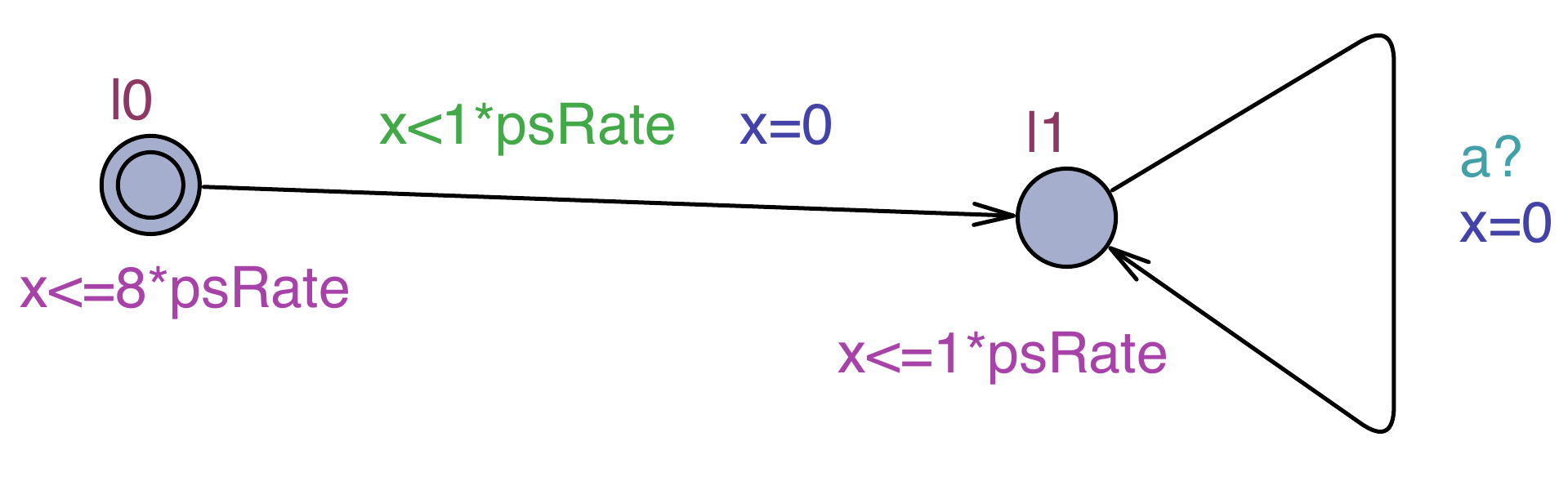} \hspace*{1cm}
\includegraphics[width=.3\textwidth]{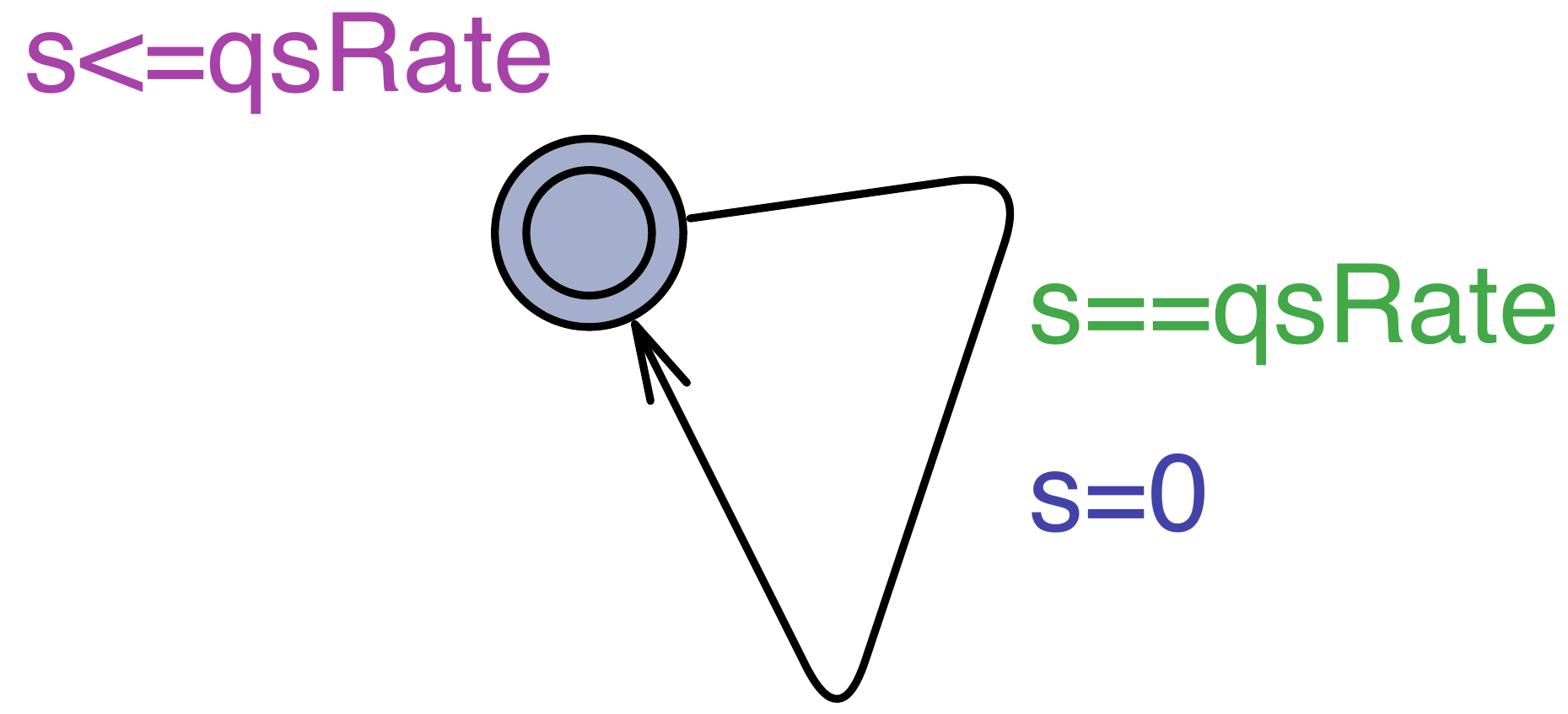}
\caption{\uppaal Models for $\calB$ of Fig.~\ref{fig-ex-sample}.}
\label{fig-uppaal-ex}
\end{figure}

\iffalse
\newpage 
\begin{figure}[hbtp]
  \centering
  \begin{tikzpicture}[thick,node distance=1cm and 2cm]% 
    \small
    \node[state] (q_1) [label=-100:{$[x \leq 2]$}] {$l_1$}; 
    \node[state] (q_2) [above right=of q_1,label=-90:{$[x \leq 2]$}] {$l_2$};
    \node[state] (q_3) [below right=of q_1,label=-90:{$[x \leq 3]$}] {$l_3$};
    \node[state] (q_4) [right=of q_2,label=-90:{$[x \leq 3]$}] {$l_4$};
    \node[state] (q_5) [right=of q_3,label=-90:{$[x \leq 3]$}] {$l_5$};
    \node[state] (q_6) [right=of q_5,label=-90:{$[x \leq 3]$}] {$l_6$};
%
    \path[->] (q_1) edge node {$\varepsilon$; $x:=0$} (q_2) 
              (q_2)  edge node[pos=0.9]  {$a$; $x$} (q_4)
              (q_1)  edge [swap,pos=0.9] node  {$a$; $x \leq 3$} (q_3)
              (q_3)  edge [swap,pos=0.9] node  {$c$; $x := 0$} (q_5)
              (q_5)  edge [swap,pos=0.9] node  {$f$; $x \geq 2$} (q_6)
              (q_4) edge[loop above] node {$b$} (q_4)
              (q_6) edge[loop above] node  {$f$} (q_6);
  \end{tikzpicture}
\caption{The Timed Automaton $\calA$}
\label{fig-ex-diag1}
\end{figure}
\fi

%%% Local Variables: 
%%% mode: latex
%%% TeX-master: "main"
%%% End: 

\section{Conclusion and Future Work}
\label{sec-conclu}
In this paper we have proved some new results for predictability of
events' occurrences for timed automata.  We also contributed a new and
simpler definition of bounded predictability for finite automata.
% This new definition is similar to the \emph{fault diagnosis}
% problem.
The natural extensions of our work are as follows:
\begin{itemize}
\item in~\cite{Bouyerfossacs05}, P.~Bouyer, F.~Chevalier and
  D.~D'Souza proposed an algorithm to decide the existence of a
  diagnoser with fixed resources (number of clocks and constants).
  The very same question arises for the existence of a predictor in
  timed systems.
\item \emph{dynamic} observers~\cite{cassez-fi-08} have been proposed
  in the context of fault diagnosis and
  \emph{opacity}~\cite{cassez-fmsd-2012}; in~\cite{cassez-acsd-07} it
  is shown how to compute a most permissive observer that ensures
  diagnosability (or opacity~\cite{cassez-atva-2009}) and also how to
  compute an optimal observer~\cite{cassez-tase-07} (w.r.t. to a given
  criterion).  We can define the same problems for predictability.
\item given the similarities between the fault diagnosis and
  predictability problems, it would be interesting to state these two
  problems in a similar and unified way and design an algorithm that
  can solve the unified version.
\end{itemize}
%%% Local Variables: 
%%% mode: latex
%%% TeX-master: "main"
%%% End: 

\bibliography{diagnosis}

\newpage

\appendix

\section{Proof of Theorem~\ref{thm-equiv}}

\begin{proof}
{\it if Part.}
Assume there exists a $0$-predictor $P$ for $A$ and
Equation~\eqref{eq-pred-lafortune} does not hold.  Then $\forall n,
\exists w \in L_f, \forall t \in \overline{w}$, $\mathbf{P}(t)$ does
not hold.
Let $t = w^{-0}=w$.  As $\mathbf{P}(t)$ does not hold: $\exists u \in
L_{\neg f}, \exists v \in \lang^*(A)/u, \projs(u)=\projs(t) \wedge |v|
\geq n$ but $|v|_f = 0$.  Assume we have $n \geq |Q|$, the number of
states of $A$.  Then $v$ has a cycle (pumping Lemma) and can be
written $v = x. y.  z$ with $|x. y^j. z|_f= 0, x .y^j. z \in
\lang^*(A)/u, \forall j \geq 0$ and thus we can build $v' =
x. y^\omega \in \lang(A)^\omega/u$ s.t. $|v'|_f=0$.  It follows that
$u \ v' \in \lang^\omega_{\neg f}$.
%Take $w' = w \ a^\omega$ with $a \in \Sigma$.
  We have: 1) $w \in
L^{-0}_f$, 2) $w' \in L^\omega_{\neg f}$.  Moreover
$\projs(w^{-0})=\projs(t)$ and $P(\projs(w^{-0}))=1$ because $P$ is a
$0$-predictor.  But $\projs(t) = \projs(u)$ and $ u \in
\prefix{L^\omega_{\neg p}}$ which entails $P(\projs(u))=0$ which is a
contradiction.

\noindent{\it Only if.}
Assume Equation~\eqref{eq-pred-lafortune} holds.
Define the mapping $P$ as follows: 
\begin{itemize}
\item $\forall w \in L^{-0}_f, P(\projs(w))=1$ and
\item $\forall w \in \prefix{L^\omega_{\neg p}}$, $P(\projs(w)) = 0$.
\end{itemize}
We can show that $P$ is a $0$-predictor \ie it is well-defined.  On
the contrary assume there exists $r = \projs(w_1)$ with $w_1 \in
L^{-0}_f$ and $r=\projs(w_2)$ with $w_2 \in \prefix{L^\omega_{\neg p}}$.
We can  show that Equation~\eqref{eq-pred-lafortune} cannot hold which
is a contradiction.
Take $n \in \setN$. $w_1 \in L^{-0}_f = L_f$ and, by
Equation~\eqref{eq-pred-lafortune}, there must exist $t \in
\prefix{w_1}$ s.t.  $\mathbf{P}(t)$ holds.
But we can exhibit two words $u \in L_{\neg p}$ and $v \in
\lang^*(A)/u$ that falsify $\mathbf{P}(t)$. $t \in \prefix{w_1}$.  As
$\projs(w_1)= \projs(w_2)$, there exists $w'_2 \in \prefix{w_2}$
s.t. $\projs(t) = \projs(w'_2)$.  $w'_2 \in \prefix{L^\omega_f}$
because $w_2 \in \prefix{L^\omega_f}$.
Take $u=w'_2$ and $v \in L_{\neg f}/w'_2$ with $|v| \geq n$ (exists as
$w'_2 \in \prefix{L^\omega_f}$).  We have $\projs(u)=\projs(t)$, $v
\in \lang^*(A)/u$, $|v| \geq n$ but $|v|_f = 0$ which contradicts
Equation~\eqref{eq-pred-lafortune}.  \qed
\end{proof}

\section{Proof  of  Lemma~\ref{lemma-inf}}
\begin{proof}
%   From assumption ($H$) at the beginning of this section, if
%   $\projs(L^{\omega,-\Delta}_f) \cap \projs(L^\omega_{\neg f}) \neq
%   \varnothing$, it can only contain infinite timed words.
%
\noindent{\it If.}
Assume $\projs(L^{-\Delta}_f) \cap \projs(\prefix{L^\omega_{\neg f}})
\neq \varnothing$.  Let $w \in \projs(L^{-\Delta}_f) \cap
\projs(\prefix{L^\omega_{\neg f}})$.  Then $w = \projs(w_1) =
\projs(w_2)$ with $w_1 \in L^{-\Delta}_f$ and $w_2
\in\prefix{L^\omega_{\neg f}}$.  Moreover there exists some $w'_2 \in
L^{\omega}_{\neg f}$ such that $w_2.w'_2 \in L^{\omega}_{\neg f}$.  It
follows that $\projs(w_2.w'_2)$ is an infinite timed word because by
assumption every infinite timed has an infinite number of events in
$\Sigma_o$.  By definition of $L^{\omega,-\Delta}_f$, $w_1.w'_2 \in
L^{\omega,-\Delta}_f$. Moreover\footnote{The condition
  $\dur(w_1)=\dur(w_2)$ is only needed for TA. For FA, it does not
  hold but is not necessary to concatenate the words.}
$\dur(w_1)=\dur(w_2)$ and $ \projs(w_1) = \projs(w_2)$ and thus
$\projs(w_1.w'_2) = \projs(w_1).\projs(w'_2) = \projs(w_2.w'_2)$. This
entails $\projs(L^{\omega,-\Delta}_f) \cap \projs(L^\omega_{\neg f})
\neq \varnothing$.
 
\noindent{\it Only If.}
Now assume $w \in \projs(L^{\omega,-\Delta}_f) \cap
\projs(L^\omega_{\neg f}) \neq \varnothing$. We have $w =
\projs(w_1.w'_1) = \projs(w_2)$ with $w_1 \in L^{-\Delta}_f$, $w_2 \in
L^{\omega}_{\neg f}$.  Let $w'_2$ be a prefix of $w_2$ such that
$\projs(w'_2)=\projs(w_1)$ (such a prefix exists because
$\projs(w_1.w'_1) = \projs(w_2)$.)  Then $w'_2 \in
\prefix{L^{\omega}_{\neg f}}$ and $w_1 \in L^{-\Delta}_{f}$ and
$\projs(w'_2)=\projs(w_1)$ which entails that $ \projs(L^{-\Delta}_f)
\cap \projs(\prefix{L^\omega_{\neg f}}) \neq \varnothing$.  \qed
\end{proof}

\section{Proof  of  Lemma~\ref{lemma-equiv}}
\begin{proof}
%Define $L^{\omega,-\Delta}_f = L^{-\Delta}_f.(\setR_{\geq 0} \times \Sigma_o)^\omega$.
%Notice that  $\projs(L^{-\Delta}_f) \cap \projs(\prefix{L^\omega_{\neg f}}) = \varnothing$
%iff $\projs(L^{\omega,-\Delta}_f) \cap \projs(L^\omega_{\neg f}) = \varnothing$.
\noindent{\it \fbox{$\supseteq$}}
Let $w \in \lang^\omega(A_1(\Delta) \times A_2) =
\lang^\omega(A_1(\Delta)) \cap \lang^\omega(A_2)$.  Then $w =
\projs(\tr(\rho_1))$ with $\rho_1 \in \runs^\omega(A_1(\Delta))$ and
$w = \projs(\tr(\rho_2))$ with $\rho_2 \in \runs^\omega(A_2)$.
We can write $\tr(\rho_1) = w_1.w'_1.w''_1$ with $w_1 \in
L^{-\Delta}_f$, $\dur(w'_1) \leq \Delta$ and $w''_1 \in (\setR_{\geq
  0} \times \Sigma_o)^\omega$ by construction of $A_1$ and its
accepting condition. It follows that $w \in
\projs(L^{\omega,-\Delta}_f) \cap \projs(L^\omega_{\neg f})$.

\noindent{\it \fbox{$\subseteq$}}
Let $w \in \projs(L^{\omega,-\Delta}_f) \cap \projs(L^\omega_{\neg
  f})$.  
Fig.~\ref{fig-split} depicts the following proof.  We can write $w =
\projs(w_1.w^+_1)$ with $w_1 \in L^{-\Delta}_f$, $w^+_1 \in
(\setR_{\geq 0} \times \Sigma_o)^\omega$.  We also have $w =
\projs(w_2.w^+_2)$ for some $w_2.w^+_2 \in L^\omega_{\neg f}$ and such
that $\projs(w_1) = \projs(w_2)$ and $\projs(w^+_1) = \projs(w^+_2)$.
Note also that $\projs(w_2.w^+_2) \in \lang^\omega(A_2)$ because we
assume every infinite timed word has an infinite number of $\Sigma_o$
actions.
\begin{figure}
%\begin{floatingfigure}[r]{.45\textwidth}
%\hspace*{-.3cm}
\center
\begin{tikzpicture}%[thick]%
      \tikzset{node distance=1cm and 2.3cm, state/.style={minimum
    width=0.1cm,minimum height=0.1cm,red,arrow head=6mm},
  loope/.style={in=65,out=115,distance=1.2cm,loop},xscale=3} 

      \draw[->,thick] (0,2) --  (3,2);
      \draw (0,2) node[label=left:{$A_1$}]  {$\bullet$};
      \draw (1,2) node  {$\bullet$};

      \draw(0.5,2.3) node {$w_1$};
      \draw(1.5,2.3) node {$w^+_1$};

      \draw[->,thick] (1,1) --  (3,1);
      \draw (1,1) node  {$\bullet$};
      \draw (2,1) node[label={above,yshift=0.09cm,xshift=.5cm}:{END}]  {$\bullet$};
      \draw[->,dashed] (2.1,1.3) -- (2.024,1.045);
      \draw(1.5,1.0+0.3) node {$w'_2$};
      \draw(2.5,1.0+0.3) node {$w''_2$};

      \draw[->,thick] (0,0) --  (3,0);
      \draw (0,0) node[label=left:{$A_2$}]  {$\bullet$};
      \draw (1,0) node  {$\bullet$};
      \draw (2,0) node  {$\bullet$};

      \draw(0.5,-0.3) node {$w_2$};
      \draw(1.5,-0.3) node {$w'_2$};
      \draw(2.5,-0.3) node {$w''_2$};

      \draw[<->,dashed,thick](1,-1+.2) -- (2,-1+.2);
      \draw(1.5,-1.2) node {$\leq \Delta$ time units};

      \draw[-,dashed] (1,-1.2) -- (1,2.2);
      \draw[-,dashed] (2,-1.2) -- (2,2.2);

      \draw[->,thick](1,2) to[in=150,out=-150] node[left] {$y:=0$} (1-0.03,1);

\iffalse
      \draw[-,thick] (0,-.1) -- (0,.1) node[above] {$0$};
      \draw[-,thick] (1,-.1) -- (1,.1)  node[above] {$1$} ;
      \draw[-,thick] (2,-.1) -- (2,.1)  node[above] {$2$};

      \draw (.1,0) node[state] {$\bullet$};
      \draw (.2,0) node[state] {$\bullet$};
      \draw (.3,0) node[state] {$\bullet$};
      \draw (.5,0) node[state] {$\bullet$};
      \draw (.83,0) node[state] {$\bullet$};
      \draw (.88,0) node[state] {$\bullet$};
\fi
\end{tikzpicture} 
\caption{Proof of Lemma~\ref{lemma-equiv}, $\subseteq$.}
\label{fig-split}
%\end{floatingfigure}
\end{figure}

\noindent As $w_1 \in L^{-\Delta}_f$, we can split $w^+_1$ into $w^+_1
= w'_1.w''_1$ with $\dur(w'_1) \leq \Delta$. We can split $w^+_2$
accordingly such that $w^+_2 = w'_2.w''_2$ and $\projs(w'_2) =
\projs(w'_1)$ and $\projs(w''_2)= \projs(w''_1)$ and $\dur(w'_2) =
\dur(w'_1) \leq \Delta$. Moreover $w_1.w'_2$ can be generated in
$A_1(\Delta)$ as follows: start with $w_1$, after $w_1$ switch to the
twin copy and reset $y$ and generate $w'_2$.  By playing $w'_1$ in the
original copy of $A$ in $A_1(\Delta)$ we reach a state $(l',v)$ where
$f$ is enabled: there is a transition $(l',g,f,R,l_f) \in E$ such that
$v \models g$.  By construction of $A_1(\Delta)$, playing $w'_2$ in
the twin copy in $A_1(\Delta)$ we reach an equivalent state
$(\tilde{l}',v)$ and a twin transition $(\tilde{l}',g \wedge y \leq
\Delta,\varepsilon,\{y\},\text{END})$.  As $\dur(w'_2) \leq \Delta$,
we must have $y \leq \Delta$ and this twin transition can be fired and
END is reachable. We can subsequently read $w''_2$ in $A_1(\Delta)$.
It follows that $\projs(w_1.w^+_2) \in \lang^\omega(A_1(\Delta))$ and
$w \in \lang^\omega(A_1(\Delta) \times A_2)$.  \qed
\end{proof}

\newcommand{\red}[1]{\textcolor{red}{#1}}
\newcommand{\green}[1]{\textcolor{green}{#1}}
\newcommand{\blue}[1]{\textcolor{blue}{#1}}
\newcommand{\white}[1]{\textcolor{white}{#1}}

\end{document}